\documentclass[a4paper, 10pt]{article}

\usepackage{bbm,xspace}

\usepackage{amsmath}
\usepackage{amssymb}
\usepackage{amsthm}

\usepackage[margin=3.2cm]{geometry}

\theoremstyle{plain}
\newtheorem{theorem}{Theorem}[section]

\theoremstyle{definition}

\newtheorem{remark}{Remark}[section]

\newcommand{\as}{\\[.6em]}

\newcommand{\bela}[1]{\begin{equation}\label{#1}}
\newcommand{\ela}{\end{equation}}
\newcommand{\bear}[1]{\begin{array}{#1}}
\newcommand{\ear}{\end{array}}
\newcommand{\del}{\partial}

\newcommand{\pf}{\operatorname{pf}}
\newcommand{\Gr}{\operatorname{Gr}}
\newcommand{\rank}{\operatorname{rank}}
\newcommand{\one}{\mathbbm{1}}

\newcommand{\Z}{\mathbbm{Z}}
\newcommand{\TED}{TED\xspace}

\arraycolsep.2em

\numberwithin{equation}{section}

\title{On an integrable multi-dimensionally consistent 
$2n$+$2n$-dimensional heavenly-type equation}

\author{B.G.\ Konopelchenko$^{1}$ and W.K.\ Schief$^{2}$
\bigskip\\  
$^{1}$Department of Mathematics and Physics,\\ University of Salento and 
INFN, Sezione di Lecce,\\ Lecce, 73100, Italy
\bigskip\\
$^{2}$School of Mathematics and Statistics,\\ The University of New South Wales,\\ Sydney, NSW 2052, Australia}

\begin{document}

\maketitle

\begin{abstract}
Based on the commutativity of scalar vector fields, an algebraic scheme is developed which leads to a privileged multi-dimensionally consistent $2n$+$2n$-dimensional integrable partial differential equation with the associated eigenfunction constituting an infinitesimal symmetry. The ``universal'' character of this novel equation of vanishing Pfaffian type is demonstrated by retrieving and generalising to higher dimensions a great variety of well-known integrable equations such as the dispersionless KP and Hirota equations and various avatars of the heavenly equation governing self-dual Einstein spaces.
\end{abstract}

\section{Introduction}

The pioneering work of Zakharov and Shabat \cite{ZakharovShabat1979} on integrable systems which are encoded in the commutativity requirement of a pair of scalar vector fields continues to generate considerable activity in this area of integrable systems theory (see, e.g., \mbox{\cite{BogdanovKonopelchenko2013,ManakovSantini2006}} and references therein). Nevertheless, while, from a constructive point of view, it is transparent why integrable systems of this type exist in {\em any} dimension, it is by no means evident how large the class of multi-dimensional {\em single} equations is which may be obtained in this manner. Prominent examples in three dimensions are given by the dispersionless KP and Hirota equations. The latter equation has the distinct property \cite{Krynski2018} that it is ``compatible with itself'' in the sense that any number of dispersionless Hirota equations, each containing three independent variables, may be solved simultaneously without having to constrain the Cauchy data associated with any individual member of this system of equations. 
This kind of phenomenon known as {\em multi-dimensional consistency} \cite{NijhoffWalker2001,BobenkoSuris2002} has proven to be a powerful key indicator of integrability in the discrete setting. This  raises the question as to the extent to which the two important areas of multi-dimensional consistency and multi-dimensional integrable equations in the spirit of Zakharov and Shabat may be brought together. 

The existence of multi-dimensionally consistent discrete equations in dimensions higher than three remains an open problem. The resolution of this problem is widely regarded as being synonymous with the determination of whether or not higher-dimensional discrete integrable equations exist. Here, it is important to note that not all equations of the compatible system have to be of the same type (cf.\ \cite{AdlerBobenkoSuris2012}). Multi-dimensional consistency in the strict sense of ``compatibility with itself'' of higher-dimensional nonlinear partial differential equations has been observed but this appears to be a very rare phenomenon. In fact, it was only in 2015 that it was indicated \cite{Bogdanov2015} that a four-dimensional integrable equation first recorded in \cite{Schief1999,Schief1996} and termed {\em general heavenly equation} in \cite{DoubrovFerapontov2010} is, in fact, multi-dimensionally consistent. 

In this paper, we show that the general heavenly equation constitutes a travelling wave reduction of the novel integrable 4+4-dimensional equation
\bela{I1}
  \epsilon^{iklm}\Theta_{x^iy^k}\Theta_{x^ly^m} = 0
\ela
which is demonstrated to be likewise multi-dimensionally consistent. Here, $\epsilon^{iklm}$ denotes the totally anti-symmetric Levi-Civita symbol and Einstein's summation convention over repeated indices has been adopted. Remarkably, the same applies to its natural $2n$+$2n$-dimensional generalisation which may be formulated as the vanishing Pfaffian of a $2n$$\times$$2n$ skew-symmetric matrix $\omega$, namely
\bela{I2}
 \pf(\omega)=0,\quad \omega_{ik} = \Theta_{x^iy^k} - \Theta_{x^ky^i}.
\ela
In the following, we refer to the partial differential equation (\ref{I2}) as the {\em \TED equation} for reasons to be explained in \S3(b).

The algebraic scheme developed here not only leads to the \TED equation but also to the identification of the associated eigenfunction as an infinitesimal symmetry of the \TED equation. In fact, a close connection between this infinitesimal symmetry and the multi-dimensional consistency of the \TED equation is revealed. Moreover, the ``universality'' of the \TED equation is demonstrated by not only capturing, for instance, the important dispersionless KP and Hirota equations but also a variety of well-known heavenly equations governing self-dual Einstein spaces (see, e.g., \cite{DoubrovFerapontov2010,PlebanskiPrzanowski1996}). This, in turn, may be exploited to derive higher-dimensional integrable analogues of those equations. We also briefly mention connections with Grassmannians, K\"ahler geometry and the multi-dimensional Manakov-Santini integrable system \cite{ManakovSantini2006} which open up interesting directions for future research. 

\section{The general heavenly equation}

The partial differential equation
\bela{Z1}
  (\lambda_3 - \lambda_4)\Theta_{x^1x^2}\Theta_{x^3x^4} = \lambda_3\Theta_{x^2x^3}\Theta_{x^1x^4} - \lambda_4\Theta_{x^1x^3}\Theta_{x^2x^4}
\ela
has originally been introduced \cite{Schief1999} in connection with a permutability theorem for the integrable Tzitz\'eica equation of affine differential geometry \cite{BobenkoSchief1999} and has been termed {\em general heavenly equation} in \cite{DoubrovFerapontov2010}.
It guarantees the compatibility of the Lax pair
\bela{Z2}
 \begin{split}
  \psi_{x^3} & = \frac{1}{(\lambda-\lambda_3)\Theta_{x^1x^2}}(\lambda\Theta_{x^1x^3}\psi_{x^2} - \lambda_3\Theta_{x^2x^3}\psi_{x^1})\\
  \psi_{x^4} & = \frac{1}{(\lambda-\lambda_4)\Theta_{x^1x^2}}(\lambda\Theta_{x^1x^4}\psi_{x^2} - \lambda_4\Theta_{x^2x^4}\psi_{x^1}),
 \end{split}
\ela
wherein $\lambda$ is an arbitrary (spectral) parameter. Moreover, an appropriately chosen pair of eigenfunctions $\psi$ has been shown to encapsulate Plebanski's first heavenly equation \cite{Schief1996}. 
An equivalent symmetric formulation of the general heavenly equation is given by
\bela{Z5}
  \alpha\Theta_{x^1x^2}\Theta_{x^3x^4} +\beta\Theta_{x^2x^3}\Theta_{x^1x^4} + \gamma\Theta_{x^3x^1}\Theta_{x^2x^4} = 0,\quad \alpha+\beta+\gamma=0
\ela
or, explicitly,
\bela{Z6}
 \begin{split}
  & \phantom{+}\,\,\, (\lambda_1-\lambda_2)(\lambda_3-\lambda_4)\Theta_{x^1x^2}\Theta_{x^3x^4}\\
  & +  (\lambda_2-\lambda_3)(\lambda_1-\lambda_4)\Theta_{x^2x^3}\Theta_{x^1x^4}\\  
  & +  (\lambda_3-\lambda_1)(\lambda_2-\lambda_4)\Theta_{x^3x^1}\Theta_{x^2x^4} = 0.
\end{split}
\ela
The avatar (\ref{Z1}) is obtained by letting $\lambda_1\rightarrow\infty$ and $\lambda_2=0$, corresponding to a fractional linear transformation of the parameters $\lambda_i$. Based on a general scheme developed in \cite{BogdanovKonopelchenko2013,BogdanovKonopelchenko2014}, it has been observed only recently \cite{Bogdanov2015} that, remarkably, the general heavenly equation in the form (\ref{Z6}) may be consistently extended to an arbitrarily large system of equations, each of which constitutes a general heavenly equation which involves a subset of four independent variables and associated parameters, namely
\bela{Z7}
 \begin{split}
  & \phantom{+}\,\,\, (\lambda_i-\lambda_k)(\lambda_l-\lambda_m)\Theta_{x^ix^k}\Theta_{x^lx^m}\\
  & +  (\lambda_k-\lambda_l)(\lambda_i-\lambda_m)\Theta_{x^kx^l}\Theta_{x^ix^m}\\  
  & +  (\lambda_l-\lambda_i)(\lambda_k-\lambda_m)\Theta_{x^lx^i}\Theta_{x^kx^m} = 0,
\end{split}
\ela
where $i,k,l,m$ are distinct but otherwise arbitrary integers. It will be shown in \S4 that this type of {\em multi-dimensional consistency} constitutes a special case of that associated with the 4+4-dimensional extension (\ref{I1}) of the general heavenly equation and, in fact, its canonical analogue (\ref{I2}) in arbitrary $2n$+$2n$ dimensions.

The symmetry of the general heavenly equation indicates that its Lax pair encodes four linear equations $X_i\psi=0$, $i=1,2,3,4$ of which only two are independent. Indeed, it is readily verified that the Lax pair (\ref{Z2}) represents the third and fourth equation of the linear system
\bela{Z8}
  \left(\bear{cccc} 0 & \omega_{34} & \omega_{42} & \omega_{23}\\ \omega_{43} & 0 & \omega_{14} & \omega_{31}\\
                    \omega_{24} & \omega_{41} & 0 & \omega_{12}\\ \omega_{32} & \omega_{13} & \omega_{21} & 0\ear\right)
  \left(\bear{c}D_1\psi\\ D_2\psi\\ D_3\psi\\ D_4\psi\ear\right)  = 0,
\ela
where the operators $D_i$ are given by
\bela{Z9}
 D_1 = \del_{x^1},\quad D_2 = -\lambda\del_{x^2},\quad D_3 = (\lambda_3-\lambda)\del_{x^3},\quad D_4 = (\lambda_4-\lambda)\del_{x^4}
\ela
and the coefficients $\omega_{ik} = -\omega_{ki}$ forming a skew-symmetric matrix $\omega$ read
\bela{Z10}
 \begin{aligned}
  \omega_{12} &= -\Theta_{x^1x^2},\quad &\omega_{23} &= \lambda_3\Theta_{x^2x^3},\quad & \omega_{31} &=\Theta_{x^3x^1}\\
  \omega_{14} &= -\Theta_{x^1x^4},\quad &\omega_{24} &= \lambda_4\Theta_{x^2x^4},\quad & \omega_{34} &=(\lambda_4-\lambda_3)\Theta_{x^3x^4}.
 \end{aligned}
\ela
Moreover, the general heavenly equation may be formulated as the vanishing Pfaffian
\bela{Z11}
  \pf(\omega) = \omega_{12}\omega_{34} + \omega_{23}\omega_{14} + \omega_{31}\omega_{24} = 0.
\ela
Since $\det\omega = [\pf(\omega)]^2$ (see, e.g., \cite{Hirota2003}), the latter is equivalent to the condition that $\rank\omega=2$. Hence, the first two equations of the linear system (\ref{Z8}), which may be formulated as
\bela{Z12}
  A^{lm}D_m\psi = 0,\quad A^{lm} = \epsilon^{iklm}\omega_{ik},
\ela
are redundant as asserted. It is recalled that the rank of a skew-symmetric matrix is even. Even-dimensional skew-symmetric matrices $A$ which are privileged in that a single condition on the entries $A^{lm}$ is equivalent to $\rank A=2$ turn out to play the central role in  this paper.

\section{The 4+4-dimensional \TED equation}

In this section, we identify the first principles which govern the rich algebraic structure summarised in the preceding and give rise to a general scheme which encapsulates the general heavenly equation as a particular case. 

\subsection*{(a) The algebraic scheme}

We begin with a linear system of the form 
\bela{Z13}
  \left(\bear{cccc} 0 & \omega_{34} & \omega_{42} & \omega_{23}\\ \omega_{43} & 0 & \omega_{14} & \omega_{31}\\
                    \omega_{24} & \omega_{41} & 0 & \omega_{12}\\ \omega_{32} & \omega_{13} & \omega_{21} & 0\ear\right)
  \left(\bear{c}D_1\psi\\ D_2\psi\\ D_3\psi\\ D_4\psi\ear\right)  = 0,
\ela
wherein the $D_i$ denote commuting linear operators which obey the standard product rule $D_i(fg) = (D_if)g + fD_ig$ and the matrix $\omega$ is skew-symmetric. In order for the linear system to have rank 2, we impose the condition $\pf(\omega)=0$, that is,
\bela{Z14}
  \omega_{12}\omega_{34} + \omega_{23}\omega_{14} + \omega_{31}\omega_{24} = 0.
\ela
Accordingly, generically, the system (\ref{Z13}) is equivalent to the pair of equations
\bela{Z15}
 \begin{split}
  X_3\psi & = (\omega_{32}D_1 + \omega_{13}D_2 + \omega_{21}D_3)\psi = 0\\
  X_4\psi & = (\omega_{24}D_1 + \omega_{41}D_2 + \omega_{12}D_4)\psi = 0.
 \end{split}
\ela
The associated compatibility condition $[X_3,X_4]\psi = 0$ modulo (\ref{Z15}) leads to a condition on $\psi$ which is of first order. It is most conveniently obtained by considering the symmetric system (\ref{Z13}) written as in the previous section as
\bela{Z16}
  A^{lm}D_m\psi = 0,\quad A^{lm} = \epsilon^{iklm}\omega_{ik}.
\ela
Indeed, we immediately deduce that
\bela{Z17}
 0 = D_l (A^{lm}D_m\psi) =  D_l(\epsilon^{iklm}\omega_{ik}D_m\psi)  = \epsilon^{iklm}(D_l\omega_{ik})D_m\psi
\ela
is, modulo the linear system itself, equivalent to the compatibility condition.

It is evident that the compatibility condition (\ref{Z17}) is {\em identically} satisfied if we impose the ``vanishing divergence'' conditions
\bela{Z18}
 D_{[l}\omega_{ik]} = 0,
\ela
where the square brackets denote total anti-symmetrisation. These may be formally resolved by setting $\omega_{ik} = D_k\Theta_i - D_i\Theta_k$ for four arbitrary functions $\Theta_i$ and the vanishing Pfaffian condition (\ref{Z14}) becomes a first-order equation for these functions. It is noted that the vector fields $A^{lm}D_m$ are indeed divergence free since $D_mA^{lm}=0$. The latter is also equivalent to the condition that the vector fields be formally anti-self adjoint.

\subsection*{(b) The 4+4-dimensional \TED equation}

In order to obtain an equation for a single function, it is natural to set
\bela{Z19}
 D_i = \del_{y^i} - \lambda\del_{x^i},
\ela
where $\lambda$ is an arbitrary parameter, so that the conditions (\ref{Z18}) split into the two sets
\bela{Z20}
 \del_{x^{[l}}\omega_{ik]} = 0,\quad \del_{y^{[l}}\omega_{ik]} = 0,
\ela
provided that it is assumed that the functions $\omega_{ik}$ are independent of the parameter $\lambda$. These two sets of conditions may be regarded as the integrability conditions for the existence of a function $\Theta$ which parametrises the functions $\omega_{ik}$ according to
\bela{Z21}
  \omega_{ik} = \Theta_{x^iy^k} - \Theta_{x^ky^i}
\ela
so that (\ref{Z14}) assumes the form
\bela{Z22}
 \begin{split}
  & \phantom{+}\,\,\, (\Theta_{x^1y^2} - \Theta_{x^2y^1})(\Theta_{x^3y^4} - \Theta_{x^4y^3})\\  
  & + (\Theta_{x^2y^3} - \Theta_{x^3y^2})(\Theta_{x^1y^4} - \Theta_{x^4y^1})\\
  & + (\Theta_{x^3y^1} - \Theta_{x^1y^3})(\Theta_{x^2y^4} - \Theta_{x^4y^2}) = 0.
 \end{split}
\ela
The latter constitutes a 4+4-dimensional generalisation of the general heavenly equation since the symmetry (travelling wave) reduction
\bela{Z23}
  \Theta_{y^i} = \lambda_i\Theta_{x^i},
\ela
which corresponds to the choice
\bela{Z24}
  D_i = (\lambda_i-\lambda)\del_{x^i},\quad \omega_{ik} = (\lambda_k-\lambda_i)\Theta_{x^ix^k},
\ela
reduces it to the symmetric general heavenly equation (\ref{Z6}). Moreover, the associated symmetric version of the linear system (\ref{Z8}), (\ref{Z9}), (\ref{Z10}) consisting of (\ref{Z13}) and (\ref{Z24}) is retrieved. In summary, we have established the following theorem.

\begin{theorem}
The Lax pair
\bela{Z25}
 \begin{aligned}
  & \phantom{+}\,\,\, (\Theta_{x^1y^2} - \Theta_{x^2y^1})(\psi_{y^3} - \lambda\psi_{x^3})
  & \quad
  & \phantom{+}\,\,\, (\Theta_{x^1y^2} - \Theta_{x^2y^1})(\psi_{y^4} - \lambda\psi_{x^4})\\  
  & + (\Theta_{x^2y^3} - \Theta_{x^3y^2})(\psi_{y^1} - \lambda\psi_{x^1})
  & \quad
  & + (\Theta_{x^2y^4} - \Theta_{x^4y^2})(\psi_{y^1} - \lambda\psi_{x^1})\\
  & + (\Theta_{x^3y^1} - \Theta_{x^1y^3})(\psi_{y^2} - \lambda\psi_{x^2}) = 0,
  & \quad
  & + (\Theta_{x^4y^1} - \Theta_{x^1y^4})(\psi_{y^2} - \lambda\psi_{x^2}) = 0
 \end{aligned}
% \begin{split}
% & \phantom{+}\,\,\, (\Theta_{x^1y^2} - \Theta_{x^2y^1})(\psi_{y^3} - \lambda\psi_{x^3})\\  
% & + (\Theta_{x^2y^3} - \Theta_{x^3y^2})(\psi_{y^1} - \lambda\psi_{x^1})\\
% & + (\Theta_{x^3y^1} - \Theta_{x^1y^3})(\psi_{y^2} - \lambda\psi_{x^2}) = 0\as
% & \phantom{+}\,\,\, (\Theta_{x^1y^2} - \Theta_{x^2y^1})(\psi_{y^4} - \lambda\psi_{x^4})\\  
% & + (\Theta_{x^2y^4} - \Theta_{x^4y^2})(\psi_{y^1} - \lambda\psi_{x^1})\\
% & + (\Theta_{x^4y^1} - \Theta_{x^1y^4})(\psi_{y^2} - \lambda\psi_{x^2}) = 0
% \end{split}
\ela
is compatible modulo the 4+4-dimensional \TED equation
\bela{Z26}
 \begin{split}
  & \phantom{+}\,\,\, (\Theta_{x^1y^2} - \Theta_{x^2y^1})(\Theta_{x^3y^4} - \Theta_{x^4y^3})\\  
  & + (\Theta_{x^2y^3} - \Theta_{x^3y^2})(\Theta_{x^1y^4} - \Theta_{x^4y^1})\\
  & + (\Theta_{x^3y^1} - \Theta_{x^1y^3})(\Theta_{x^2y^4} - \Theta_{x^4y^2}) = 0.
 \end{split}
\ela
\end{theorem}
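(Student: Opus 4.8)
The plan is to recognise the asserted Lax pair (\ref{Z25}) as nothing but the two independent equations (\ref{Z15}) of the symmetric linear system (\ref{Z13}), specialised to the operators $D_i=\partial_{y^i}-\lambda\partial_{x^i}$ of (\ref{Z19}) and the potential parametrisation $\omega_{ik}=\Theta_{x^iy^k}-\Theta_{x^ky^i}$ of (\ref{Z21}), and then to invoke the general scheme of \S3(a) to conclude compatibility. First I would substitute (\ref{Z19}) and (\ref{Z21}) into $X_3\psi=0$ and $X_4\psi=0$ and verify, by a relabelling of indices and a harmless overall sign, that the two columns of (\ref{Z25}) are reproduced term by term; this confirms that (\ref{Z25}) is precisely the pair (\ref{Z15}). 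At the same time I would record that under the parametrisation (\ref{Z21}) the vanishing-Pfaffian condition (\ref{Z14}) becomes exactly the \TED equation (\ref{Z22}), restated in the theorem as (\ref{Z26}).

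The central point is then the dual role played by the \TED equation. On the one hand, since $\det\omega=[\pf(\omega)]^2$, the equation $\pf(\omega)=0$ is equivalent to $\rank\omega=2$, which is exactly what renders two of the four rows of (\ref{Z13}) redundant and thereby legitimises the passage from the (generically overdetermined, rank-$4$) symmetric system to the genuine two-component Lax pair (\ref{Z15}). This is the sense in which compatibility is to be understood \emph{modulo} (\ref{Z26}): the \TED equation is the nonlinear constraint guaranteeing that (\ref{Z25}) possesses the expected solution space. On the other hand, by the computation (\ref{Z17}) of \S3(a), the compatibility condition $[X_3,X_4]\psi=0$, reduced modulo the linear system, is equivalent to $\epsilon^{iklm}(D_l\omega_{ik})D_m\psi=0$, the second-order contributions having dropped out because $A^{lm}=\epsilon^{iklm}\omega_{ik}$ is skew in $l,m$ while the commuting operators satisfy $D_lD_m=D_mD_l$.

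It remains to show that this compatibility condition holds identically. Here the key step is to verify that the potential representation (\ref{Z21}) forces the vanishing-divergence conditions (\ref{Z18}), namely $D_{[l}\omega_{ik]}=0$. With $D_i=\partial_{y^i}-\lambda\partial_{x^i}$ these split, as in (\ref{Z20}), into $\partial_{x^{[l}}\omega_{ik]}=0$ and $\partial_{y^{[l}}\omega_{ik]}=0$; substituting $\omega_{ik}=\Theta_{x^iy^k}-\Theta_{x^ky^i}$ and summing cyclically over $l,i,k$, every term cancels against its partner by equality of the mixed third-order partial derivatives of $\Theta$. Consequently $\epsilon^{iklm}D_l\omega_{ik}=0$ for each $m$, so the right-hand side of (\ref{Z17}) vanishes for \emph{every} $\psi$, and the compatibility condition is an identity. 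I expect the main obstacle to be neither of these computations individually---both are routine---but rather the need to keep the two roles of the \TED equation cleanly separated: it is simultaneously the Pfaffian (rank-two) condition that validates the reduction to a two-equation Lax pair and the nonlinear equation over which that pair is compatible, while the closure of the commutator itself is an identity inherited from the potential structure of $\omega$.
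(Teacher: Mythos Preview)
Your proposal is correct and follows essentially the same route as the paper: the theorem is stated as a summary of the development in \S3(a) and \S3(b), and your argument traces precisely those steps---identifying (\ref{Z25}) with the pair (\ref{Z15}) under the specialisations (\ref{Z19}), (\ref{Z21}), invoking the rank-two/Pfaffian equivalence for the reduction from (\ref{Z13}), and using (\ref{Z17}) together with the vanishing-divergence conditions (\ref{Z18})/(\ref{Z20}) (which hold identically for the potential parametrisation (\ref{Z21})) to close the commutator. Your emphasis on the ``dual role'' of the \TED equation is an accurate reading of how the paper uses $\pf(\omega)=0$.
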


It is emphasised that the splitting of the vanishing divergence conditions (\ref{Z18}) into the two sets (\ref{Z20}) may also be achieved by making the choice
\bela{Z26a}
  D_l = \del_{z^l},\quad z^l = x^l +iy^l
\ela
and demanding that the matrix $\omega$ be real so that the 4+4-dimensional \TED equation may be formulated as
\bela{Z26b}
  \epsilon^{iklm}\Theta_{z^i\bar{z}^k}\Theta_{z^l\bar{z}^m} = 0.
\ela
It is noted that the matrix ${(\Theta_{z^i\bar{z}^k})}_{i,k=1,\ldots, 4}$ is known as the {\em complex Hessian} \cite{GunningRossi1965} of $\Theta$ so that one may interpret the 4+4-dimensional \TED equation as the vanishing Pfaffian (of the skew-symmetric part) of the complex Hessian of $\Theta$.

In terms of differential forms, the differential equation (\ref{Z26b}) may be formulated as the pair of equations 
\bela{Z83a}
  \del\Omega = 0,\quad \wedge^2\Omega=0, 
\ela
where the coefficients of the differential 2-form
\bela{Z83b}
  \Omega = \omega_{ik}dz^i\wedge dz^k
\ela
are real and $\del$ is the standard Dolbeault operator \cite{BlumenhagenLuestTheisen2013}, that is,
\bela{Z83c}
 \del\Omega = (\del_{z^l}\omega_{ik})dz^l\wedge dz^i\wedge dz^k.
\ela
Indeed, the condition that $\Omega$ be closed with respect to the Dolbeault operator together with the reality of $\omega$ is equivalent to the vanishing divergence conditions (\ref{Z20}). Here, we assume that the technical assumptions for the validity of the analogue of the Poincar\'e lemma for exterior derivatives hold so that the  [t]wo-form $\Omega$ is [e]xact with respect to the [D]olbeault operator (TED) and, hence, $\Omega \sim \del(\Theta_{\bar{z}^i}dz^i)$ by virtue of the reality of $\omega$. This is the origin of the acronym TED which encodes the main ingredients of the differential form avatar of (\ref{Z26}).  The (reverse) acronym is also indicative of the aforementioned fact that the Pfaffian is the (signed) square root of a [det]erminant so that the \TED equation is equivalent to $\det\omega = [\pf(\omega)]^2=0$.

\subsection*{(c) An infinitesimal symmetry}

In order to establish the multi-dimensional consistency of the 4+4-dimensional \TED equation, we first prove a particular property of the \TED equation, namely that its associated eigenfunction constitutes an infinitesimal symmetry. 

\begin{theorem}\label{symmetry}
Any solution $\psi$ of the Lax pair (\ref{Z25}) or its equivalent symmetric form (\ref{Z13}), (\ref{Z19}), (\ref{Z21}) represents an infinitesimal symmetry of the 4+4-dimensional \TED equation (\ref{Z26}), that is, the latter is preserved by the flow $\Theta_s=\psi$. 
\end{theorem}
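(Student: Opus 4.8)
The plan is to compute the linearisation of the \TED equation along the flow $\Theta_s=\psi$ and to show that it vanishes as a direct consequence of the Lax pair together with the structural identities already established. Writing the equation in the compact form $\pf(\omega)=\tfrac18\epsilon^{iklm}\omega_{ik}\omega_{lm}=0$ with $\omega_{ik}=\Theta_{x^iy^k}-\Theta_{x^ky^i}$, I first observe that under $\Theta\mapsto\Theta+s\psi$ the entries vary by $\delta\omega_{ik}=\psi_{x^iy^k}-\psi_{x^ky^i}$, that is, by exactly the same expression as $\omega_{ik}$ with $\Theta$ replaced by $\psi$. Polarising the Pfaffian and exploiting the skew-symmetry of $A^{lm}=\epsilon^{iklm}\omega_{ik}$ (cf.\ (\ref{Z16})) then casts the linearised equation in the form

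\bela{P1}
  \frac{d}{ds}\Big|_{s=0}\pf(\omega) = \tfrac14 A^{lm}\delta\omega_{lm} = \tfrac12 A^{lm}\psi_{x^ly^m}.
\ela

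The entire task therefore reduces to proving that $A^{lm}\psi_{x^ly^m}=0$.

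Next I would eliminate the mixed second derivative in favour of the Lax operators $D_m=\del_{y^m}-\lambda\del_{x^m}$. Since $A^{lm}\psi_{x^lx^m}=0$ (a skew-symmetric tensor contracted with a symmetric one), one may add a vanishing multiple of $\lambda$ to write $A^{lm}\psi_{x^ly^m}=A^{lm}\del_{x^l}(D_m\psi)$, and the product rule then gives

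\bela{P2}
  A^{lm}\del_{x^l}(D_m\psi) = \del_{x^l}\big(A^{lm}D_m\psi\big) - \big(\del_{x^l}A^{lm}\big)D_m\psi.
\ela

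The first term on the right vanishes identically because $\psi$ solves the Lax pair, which in its symmetric form (\ref{Z13}), (\ref{Z16}) reads precisely $A^{lm}D_m\psi=0$. The second term vanishes because $\del_{x^l}A^{lm}=\epsilon^{iklm}\del_{x^l}\omega_{ik}=0$, this being nothing but the $x$-part of the vanishing divergence conditions (\ref{Z20}), which holds identically for any $\omega$ of the form (\ref{Z21}) since the totally anti-symmetrised third derivatives of $\Theta$ cancel in pairs. Combining (\ref{P1}) with (\ref{P2}) yields $\frac{d}{ds}\big|_{s=0}\pf(\omega)=0$, which is the assertion.

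The computation is short, and the only genuine subtlety — the step I would treat most carefully — is the bookkeeping that collapses the linearised Pfaffian into the single contraction $A^{lm}\psi_{x^ly^m}$ and the verification that both terms in (\ref{P2}) drop out. In particular one must be careful that the relevant divergence is the contraction $\del_{x^l}A^{lm}$ on the \emph{first} index, governed by $\del_{x^{[l}}\omega_{ik]}=0$, rather than $D_mA^{lm}$ used earlier in establishing compatibility; although both vanish, it is the former that is needed here. It is worth noting that the argument uses the Lax pair at a fixed but arbitrary spectral parameter $\lambda$ and never invokes $\pf(\omega)=0$ directly: the equation enters only implicitly, through the solvability of the rank-two Lax system that guarantees the existence of a nontrivial eigenfunction $\psi$. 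I would expect the same mechanism to govern the $2n$+$2n$-dimensional \TED equation (\ref{I2}), with $A^{lm}$ replaced by the corresponding Pfaffian cofactor and with the two cancellations resting once more on skew-symmetry and on a vanishing first-index divergence inherited from the parametrisation $\omega_{ik}=\Theta_{x^iy^k}-\Theta_{x^ky^i}$, though the requisite divergence identity would have to be re-established in that setting.
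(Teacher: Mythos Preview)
Your proof is correct and follows essentially the same route as the paper's own argument: both reduce the linearised \TED equation to the single contraction $\epsilon^{iklm}\omega_{ik}\psi_{x^ly^m}=0$ (your $A^{lm}\psi_{x^ly^m}=0$), and both verify this by applying $\del_{x^l}$ to the symmetric Lax system $A^{lm}D_m\psi=0$, using the skew-symmetry of $A^{lm}$ to kill the $\psi_{x^lx^m}$ term and the vanishing divergence condition (\ref{Z20})$_1$ to dispose of $(\del_{x^l}A^{lm})D_m\psi$. The paper merely compresses your product-rule identity (\ref{P2}) into a single line, and your remarks on the $2n$+$2n$ extension anticipate exactly what the paper does in \S6.
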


\begin{proof}
If we formulate the 4+4-dimensional \TED equation as
\bela{Z27}
  \epsilon^{iklm}\omega_{ik}\omega_{lm} = 0
\ela
with $\omega_{ik}$ defined by (\ref{Z21}) then the condition for a quantity $\varphi$ with associated flow
\bela{Z28}
 \Theta_s = \varphi
\ela
to be an infinitesimal symmetry is given by
\bela{Z29}
 \del_s(\epsilon^{iklm}\omega_{ik}\omega_{lm}) = 0,
\ela
By virtue of the symmetry of the Levi-Civita symbol, this condition may be expressed as
\bela{Z30}
  \epsilon^{iklm}\omega_{ik}\varphi_{x^ly^m} = 0
\ela
since
\bela{Z31}
  \del_s\omega_{ik} = \varphi_{x^iy^k} - \varphi_{x^ky^i}.
\ela
On the other hand, the linear system (\ref{Z16}) with $D_i$ given by (\ref{Z19}) implies that
\bela{Z32}
 \begin{split}
  0 & = \del_{x^l}(\epsilon^{iklm}\omega_{ik}D_m\psi)\\
     & = \epsilon^{iklm}[(\del_{x^l}\omega_{ik})D_m\psi + \omega_{ik}\psi_{x^ly^m}]\\
     & = \epsilon^{iklm}\omega_{ik}\psi_{x^ly^m}
 \end{split}
\ela
by virtue of the vanishing divergence conditions (\ref{Z20})$_1$. Accordingly, $\varphi=\psi$ satisfies the condition (\ref{Z30}) for an infinitesimal symmetry.
\end{proof}

\begin{remark}
The above theorem implies that if we eliminate $\psi$ in the Lax pair (\ref{Z25}) in favour of $\Theta_s$ then the 4+4-dimensional \TED equation and the pair
\bela{Z33}
 \begin{aligned}
  & \phantom{+}\,\,\, (\Theta_{x^1y^2} - \Theta_{x^2y^1})(\Theta_{sy^3} - \lambda\Theta_{sx^3})
  & \quad  
  & \phantom{+}\,\,\, (\Theta_{x^1y^2} - \Theta_{x^2y^1})(\Theta_{sy^4} - \lambda\Theta_{sx^4})\\
  & + (\Theta_{x^2y^3} - \Theta_{x^3y^2})(\Theta_{sy^1} - \lambda\Theta_{sx^1})
  & \quad
  & + (\Theta_{x^2y^4} - \Theta_{x^4y^2})(\Theta_{sy^1} - \lambda\Theta_{sx^1})\\
  & + (\Theta_{x^3y^1} - \Theta_{x^1y^3})(\Theta_{sy^2} - \lambda\Theta_{sx^2}) = 0,
  & \quad
  & + (\Theta_{x^4y^1} - \Theta_{x^1y^4})(\Theta_{sy^2} - \lambda\Theta_{sx^2}) = 0 
 \end{aligned}
\ela
are compatible. Each of the above equations evidently constitutes a 4+3-dimensional travelling wave reduction of a 4+4-dimensional \TED equation. Hence, if we take into account the symmetric form (\ref{Z13}) of the Lax pair then the existence of the infinitesimal symmetry $\psi$ is equivalent to the 4+4-dimensional \TED equation being compatible with four 4+3-dimensional \TED equations. 
\end{remark}

\begin{remark}
If we adopt the notation $s=x^5$ and $\lambda=\lambda_5$ then elimination of $\psi$ in the linear system (\ref{Z13}) and (\ref{Z24}) associated with the symmetric general heavenly equation (\ref{Z6}) produces the additional four compatible general heavenly equations (\ref{Z7}). Hence, the multi-dimensional consistency of the general heavenly equation observed by Bogdanov \cite{Bogdanov2015} is generated by its infinitesimal symmetry $\psi$. This is reminiscent of the standard B\"acklund transformation for the discrete integrable master Hirota equation (see, e.g.,  \cite{AdlerBobenkoSuris2012}) with the infinitesimal symmetry replaced by a discrete symmetry.
\end{remark}

\subsection*{(d) Symmetry constraints. A 3+3-dimensional dispersionless Hirota equation}

Reductions of the compatible system of 4+3-dimensional \TED equations generated by the infinitesimal symmetry $\psi$ may be obtained by matching the latter with any other infinitesimal symmetry. For instance, since the 4+4-dimensional \TED equation may also be formulated as
\bela{Z33a}
  \epsilon^{iklm}\Theta_{x^iy^k}\Theta_{x^ly^m} = 0,
\ela
comparison with the symmetry condition (\ref{Z30}) shows that $\varphi=\Theta$ constitutes another infinitesimal symmetry which, in fact, corresponds to the scaling invariance $\Theta\rightarrow\mu\Theta$ of the 4+4-dimensional \TED equation. Accordingly, if we apply the symmetry constraint $\Theta=\psi$ in the symmetric form (\ref{Z13}) of the Lax pair (\ref{Z25}) or, equivalently, set $\Theta_s = \Theta$ in the four 4+3-dimensional \TED equations of type (\ref{Z33}) then we obtain
\bela{Z33b}
 \begin{split}
  & \phantom{+}\,\,\, (\Theta_{x^iy^k} - \Theta_{x^ky^i})(\Theta_{y^l} - \lambda\Theta_{x^l})\\  
  & + (\Theta_{x^ky^l} - \Theta_{x^ly^k})(\Theta_{y^i} - \lambda\Theta_{x^i})\\
  & + (\Theta_{x^ly^i} - \Theta_{x^iy^l})(\Theta_{y^k} - \lambda\Theta_{x^k}) = 0,
 \end{split}
\ela
where the indices $i,k,l$ are distinct. Even though the parameter $\lambda$  encapsulates the symmetry between the independent variables $x^i$ and $y^i$ modulo $\lambda\rightarrow\lambda^{-1}$, one may, without loss of generality, set $\lambda=0$.

By construction, any of the above equations is an algebraic consequence of the other three compatible equations and so is the original 4+4-dimensional \TED equation (\ref{Z33a}). This feature is the exact analogue of the multi-dimensional consistency of the dispersionless Hirota equation discussed in \S5(d). In fact, each of the above equations constitutes a 3+3-dimensional generalisation of the dispersionless Hirota equation (see, e.g., \cite{Krynski2018}) as may be seen by applying the travelling wave reduction $\Theta_{y^i}=\lambda_i\Theta_{x^i}$. It is recalled that the latter has been shown in the preceding to reduce the 4+4-dimensional \TED equation to the general heavenly equation.

\section{Multi-dimensional consistency}

It turns out that the remarkable compatibility property generated by the infinitesimal symmetry of the 4+4-dimensional \TED equation is still present if one replaces the 4+3-dimensional equations by their underlying 4+4-dimensional counterparts. This kind of completely symmetric compatibility phenomenon is well known \cite{NijhoffWalker2001,BobenkoSuris2002} for some privileged discrete integrable equations such as the discrete Hirota equation (cf.\ \S 8(b)), in which context it has come to be known as {\em multi-dimensional consistency} \cite{AdlerBobenkoSuris2012}.

\begin{theorem}
The 4+4-dimensional \TED equation
\bela{Z34}
   H^5 = \omega_{12}\omega_{34} + \omega_{23}\omega_{14} + \omega_{31}\omega_{24} = 0
\ela
is multi-dimensionally consistent, that is, the 5+5-dimensional system of 4+4-dimensional \TED equations
\bela{Z35}
 \begin{split}
   H^1 & = \omega_{23}\omega_{45} + \omega_{34}\omega_{25} + \omega_{42}\omega_{35} = 0,\quad
  H^2 = \omega_{34}\omega_{51} + \omega_{45}\omega_{31} + \omega_{53}\omega_{41} = 0\\
   H^3 & = \omega_{45}\omega_{12} + \omega_{51}\omega_{42} + \omega_{14}\omega_{52} = 0,\quad 
  H^4 = \omega_{51}\omega_{23} + \omega_{12}\omega_{53} + \omega_{25}\omega_{13} = 0\\
 \end{split}
\ela
and (\ref{Z34}), where
\bela{Z36}
  \omega_{ik} = \Theta_{x^iy^k} - \Theta_{x^ky^i}
\ela
for  $i\neq k\in\{1,2,3,4,5\}$,
is compatible in that it is in involution in the sense of Riquier-Janet (Cartan-K\"ahler) theory \cite{Seiler2010}.
\end{theorem}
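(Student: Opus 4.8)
The plan is to trade the five equations for a single statement about the rank of the $5\times5$ skew-symmetric matrix $\omega=(\omega_{ik})$ and then to verify formal involutivity in the sense of Seiler's formal theory. First I would record that $H^1,\dots,H^5$ are nothing but the five $4\times4$ principal sub-Pfaffians of $\omega$, the quantity $H^j$ being the Pfaffian of the submatrix obtained by deleting the row and column carrying the index $j$. Exactly as in \S2, where $\pf(\omega)=0$ was equivalent to $\rank\omega=2$, the simultaneous conditions $H^1=\dots=H^5=0$ are therefore equivalent to $\rank\omega\le2$, that is, to the decomposability $\omega_{ik}=a_ib_k-a_kb_i$ of the associated two-form. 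It is essential that the five equations are not independent: the classical Pfaffian expansion identity for an odd-sized skew-symmetric matrix asserts that the vector of signed sub-Pfaffians lies in the kernel of $\omega$, giving, up to the sign conventions fixed in (\ref{Z35}),

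\bela{P1}
  \sum_{k}(-1)^{k}\omega_{ik}H^{k}=0,\quad i=1,\dots,5,
\ela

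an identity holding off the solution manifold. Since the Pfaffian variety $\{\rank\omega\le2\}$ has codimension $\binom{3}{2}=3$ in the $10$-dimensional space of skew-symmetric $5\times5$ matrices, precisely two of the five equations are redundant at a generic point, mirroring the rank-$2$ redundancy already exploited in \S2.

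Next I would set up the involution analysis. The system is of second order in the single unknown $\Theta$ in the ten variables $x^1,\dots,x^5,y^1,\dots,y^5$, and only the mixed derivatives $\Theta_{x^iy^k}$ enter, through $\omega_{ik}$. Linearising $H^j$ and replacing $\del_{x^l}\del_{y^m}\to\sigma_{lm}(\xi)$, where $\sigma_{lm}(\xi)=\xi_{x^l}\xi_{y^m}-\xi_{x^m}\xi_{y^l}$ for a covector $\xi=(\xi_{x^i},\xi_{y^i})$, yields the principal symbol

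\bela{P2}
  \tau_j(\xi)=\epsilon^{jiklm}\omega_{ik}\,\sigma_{lm}(\xi).
\ela

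Evaluated on a solution, where $\omega=a\wedge b$, this is proportional to the Hodge dual of the four-vector $a\wedge b\wedge\xi_x\wedge\xi_y$, so that $\xi$ is characteristic precisely when $a,b,\xi_x,\xi_y$ are linearly dependent; this fixes the characteristic variety and, through the dimensions of the symbol spaces in a generic flag, the Cartan characters $s_1\ge s_2\ge\cdots$. At this stage the vanishing-divergence conditions (\ref{Z20}), which hold identically because $\omega$ is generated by the potential $\Theta$, are what ensure that the mixed-derivative constraints $\del_{x^{[l}}\omega_{ik]}=0$ contribute no hidden first-order obstruction to prolongation.

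Finally, involution follows from Cartan's test, and this is where I expect the real work to lie: one must verify that the dimension of the prolonged symbol equals $\sum_k k\,s_k$ and, equivalently, that the integrability conditions obtained by cross-differentiating the five equations reduce to zero modulo the system and its first prolongation. The two features established above make this tractable. The solution-dependence of the symbol (\ref{P2}) is removed by passing to the rank-$2$ normal form $\omega=a\wedge b$, which reduces $\tau_j$ to the single Hodge-dual one-form described above; and the syzygies (\ref{P1}), together with the identities $\epsilon^{niklm}\del_{x^n}\omega_{ik}=0=\epsilon^{niklm}\del_{y^n}\omega_{ik}$ that are the five-index counterparts of (\ref{Z20}), should ensure that every candidate integrability condition is a differential consequence of (\ref{P1}) and hence vanishes on shell. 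This is precisely the full $4+4$-dimensional analogue of the symmetry-generated compatibility of \S3(c): there the infinitesimal symmetry $\psi$ rendered the $4+4$-dimensional \TED equation compatible with four $4+3$-dimensional \TED equations, and the present computation confirms that the same compatibility survives when those reductions are lifted to their underlying $4+4$-dimensional counterparts $H^1,\dots,H^4$. Carrying out the Cartan-character bookkeeping — most conveniently in the symmetric frame of \S2 or with the aid of computer algebra — then establishes that the system is in involution.
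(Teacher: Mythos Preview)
Your algebraic preliminaries are correct and useful: the $H^j$ are indeed the principal sub-Pfaffians, the syzygies (\ref{P1}) hold identically, and only three of the five equations are independent. The paper establishes the same redundancy, though by a different route: it writes $H^1,\ldots,H^4$ as the rank-$2$ linear system
\[
  \left(\begin{array}{cccc} 0 & \omega_{34} & \omega_{42} & \omega_{23}\\ \omega_{43} & 0 & \omega_{14} & \omega_{31}\\
                    \omega_{24} & \omega_{41} & 0 & \omega_{12}\\ \omega_{32} & \omega_{13} & \omega_{21} & 0\end{array}\right)
  \left(\begin{array}{c}\omega_{15}\\ \omega_{25}\\ \omega_{35}\\ \omega_{45}\end{array}\right)  = 0,
\]
which is the Lax system with $D_i\psi$ replaced by $\omega_{i5}$.

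The gap is that you never actually verify involution. You set up the symbol, announce that Cartan's test is ``where the real work lies'', and then defer that work to unspecified bookkeeping or computer algebra, asserting only that the divergence identities and syzygies ``should ensure'' the integrability conditions vanish. That is a plan, not a proof. The paper, by contrast, bypasses Cartan characters entirely and gives a short direct argument: take $H^3=H^4=0$ as an evolution in $y^5$ (an inhomogeneous version of the Lax pair, with $\Theta_{y^5}$ playing the role of the eigenfunction), and check two things. First, the cross-compatibility $(\Theta_{x^3y^5})_{x^4}=(\Theta_{x^4y^5})_{x^3}$, written symmetrically as $\partial_{x^l}[\epsilon^{iklm}\omega_{ik}(\Theta_{x^my^5}-\Theta_{x^5y^m})]=0$, collapses via $\partial_{x^{[l}}\omega_{ik]}=0$ to $\partial_{x^5}H^5=0$. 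Second, $\partial_{y^5}H^5$ is rewritten, using $\partial_{y^{[l}}\omega_{ik]}=0$, as $4\,\partial_{y^m}(\epsilon^{iklm}\omega_{ik}\omega_{l5})$, which is a total $y^m$-derivative of the linear system above and hence vanishes modulo $H^3=H^4=0$. Both steps are instances of the single identity $D_pH^p=0$ with $H^p=\epsilon^{i_1\cdots i_4 p}\omega_{i_1i_2}\omega_{i_3i_4}$. This is the concrete mechanism your proposal gestures at but does not supply; it is also far more economical than a full Cartan-character computation.
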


\begin{proof}
We first note that only three of the five equations (\ref{Z34}), (\ref{Z35}) are algebraically independent. This is best seen by regarding the equations $H^i$, $i=1,\ldots,4$ as linear equations for the quantities $\omega_{i5}$. In fact, these four equations may be formulated as
\bela{Z37}
  \left(\bear{cccc} 0 & \omega_{34} & \omega_{42} & \omega_{23}\\ \omega_{43} & 0 & \omega_{14} & \omega_{31}\\
                    \omega_{24} & \omega_{41} & 0 & \omega_{12}\\ \omega_{32} & \omega_{13} & \omega_{21} & 0\ear\right)
  \left(\bear{c}\omega_{15}\\ \omega_{25}\\ \omega_{35}\\ \omega_{45}\ear\right)  = 0,
\ela
which may be obtained by formally replacing in the linear system (\ref{Z13}) the quantities $D_i\psi$ by $\omega_{i5}$. Since the above linear system has rank 2, it is only necessary to demonstrate the compatibility of, for instance, the triple of equations
\bela{Z38}
 H^3=0,\quad H^4=0,\quad H^5=0.
\ela
Now, the equations $H^3=0$ and $H^4=0$ may be regarded as defining the ``evolution'' of $\Theta$ in the $y^5$-direction, that is, we may formulate them as the pair
\bela{Z39}
 \begin{split}
  \Theta_{x^3y^5} & = \Theta_{x^5y^3} + \frac{\omega_{32}}{\omega_{12}}(\Theta_{x^1y^5} - \Theta_{x^5y^1}) +  \frac{\omega_{13}}{\omega_{12}}(\Theta_{x^2y^5} - \Theta_{x^5y^2})\\
  \Theta_{x^4y^5} & = \Theta_{x^5y^4} + \frac{\omega_{42}}{\omega_{12}}(\Theta_{x^1y^5} - \Theta_{x^5y^1}) +  \frac{\omega_{14}}{\omega_{12}}(\Theta_{x^2y^5} - \Theta_{x^5y^2}).
 \end{split}
\ela
It is observed that the latter constitutes an inhomogeneous analogue of the Lax pair (\ref{Z25}) with the quantity $\Theta_{y^5}$ playing the role of the eigenfunction. Accordingly, it is required to show that this pair is compatible and that the remaining equation $H^5=0$ is in involution with respect to the variable $y^5$. 

As in the case of the compatibility of the Lax pair for the 4+4-dimensional \TED  equation, it is convenient to formulate the compatibility condition ${(\Theta_{x^3y^5})}_{x^4}={(\Theta_{x^4y^5})}_{x^3}$ for the pair (\ref{Z39}) in terms of the symmetric system (\ref{Z37}), that is,
\bela{Z40}
  \epsilon^{iklm}\omega_{ik}(\Theta_{x^my^5} - \Theta_{x^5y^m})=0.
\ela
Thus, it is required to show that the compatibility condition
\bela{Z41}
 \begin{split}
 0 &= \del_{x^l}[\epsilon^{iklm}\omega_{ik}(\Theta_{x^my^5} - \Theta_{x^5y^m})]\\ &= \epsilon^{iklm}[(\del_{x^l}\omega_{ik})(\Theta_{x^my^5} - \Theta_{x^5y^m}) - \omega_{ik}\Theta_{x^5x^ly^m}]
\end{split}
\ela
is satisfied modulo $H^5=0$ and its differential consequences not involving $y^5$. The vanishing divergence conditions (\ref{Z20})$_1$ and the symmetry of the Levi-Civita symbol lead to the simplification
\bela{Z42}
 0 = -\frac{1}{2}\epsilon^{iklm}\omega_{ik}\del_{x^5}\omega_{lm} = -\frac{1}{4}\del_{x^5}(\epsilon^{iklm}\omega_{ik}\omega_{lm})
\ela
which holds since this is essentially $\del_{x^5}H^5 = 0$.

In order to show that $H^5$ is preserved by the $y^5$-evolution, we evaluate $\del_{y^5}H^5$ modulo the
vanishing divergence conditions (\ref{Z20})$_2$ to obtain
\bela{Z43}
 \begin{split}
 \del_{y^5}(\epsilon^{iklm}\omega_{ik}\omega_{lm}) &= 2\epsilon^{iklm}\omega_{ik}\del_{y^5}\omega_{lm}\\
 &=4\epsilon^{iklm}\omega_{ik}\del_{y^m}\omega_{l5}\\
 &=4\del_{y^m}(\epsilon^{iklm}\omega_{ik}\omega_{l5}).
 \end{split}
\ela
This completes the proof of the theorem since the last expression in the brackets coincides with the left-hand side of the linear system (\ref{Z37}) representing $H^3=H^4=0$.
\end{proof}

\begin{remark}
The origin of the two relations on which the above proof is based, namely, on the one hand, the equality of the first expression in (\ref{Z41}) and the second expression in (\ref{Z42}) and, on the other hand, the equality of the first and last expressions in (\ref{Z43}), lies in a simple identity which may also be used to prove the multi-dimensional consistency of the higher-dimensional extension of the 4+4-dimensional \TED equation discussed in \S 6. Specifically, if we scale the quantities $H^i$ by appropriate constants so that
\bela{Z44}
  H^p = \epsilon^{iklmp}\omega_{ik}\omega_{lm},\quad p=1,\ldots,5
\ela
then the vanishing divergence conditions (\ref{Z18}) immediately imply that
\bela{Z45}
  D_pH^p=0
\ela
so that 
\bela{Z46}
 \del_{x^p}H^p = 0,\quad \del_{y^p}H^p=0.
\ela
It is now readily verified that the latter coincide with the two above-mentioned key relations.
\end{remark}  

\section{Reductions}

There exists an extensive literature on the equations governing self-dual Einstein spaces or, equivalently, the self-dual Yang-Mills equations with four translational symmetries and the gauge group of volume preserving diffeomorphisms. In fact, various forms of their representation in terms of a single equation have been derived by various authors (see, e.g., \cite{PlebanskiPrzanowski1996,DoubrovFerapontov2010}, and references therein). The first two such "heavenly" equations have been derived by Plebanski in \cite{Plebanski1975}. There exists a 6-dimen\-sional version of Plebanski's second heavenly equation \cite{Takasaki1989,PlebanskiPrzanowski1996} which enjoys the important property that it admits a variety of reductions which have appeared in the literature in various contexts. These include Plebanski's first and second heavenly equations, the evolutionary form of the second heavenly equation and the modified heavenly equation, and the Husain-Park and Grant equations. However, the general heavenly equation cannot be derived from the \mbox{6-dimen}\-sional second heavenly equation. Here, we show that the 6-dimensional second heavenly equation is, in fact, a 2+4-dimensional reduction of the 4+4-dimensional \TED equation so that the latter unifies the heavenly-type equations listed in the preceding. We also embark on a demonstration of the ``universal'' character of the 4+4-dimensional \TED equation by discussing its reductions of a different type to the Boyer-Finley and dispersionless Kadomtsev-Petviashvili (KP) equations (see, e.g., \cite{BogdanovKonopelchenko2013}), the  dispersionless Hirota equation and the Mikhalev equation \cite{Mikhalev1992}. By construction, any of the above-mentioned reductions is accompanied by the corresponding reduction of the Lax pair. 

\subsection*{(a) The 6-dimensional second heavenly equation}

It is evident that the 4+4-dimensional \TED equation and its associated Lax pair remain autonomous if a term quadratic in the independent variables is separated from the function $\Theta$. This may be used effectively to generate dimensional reductions of the 4+4-dimensional \TED equation (\ref{Z26}). For instance, if we set
\bela{Z47}
  \Theta = x^1y^2 + \Xi(x^3,x^4,y^1,y^2,y^3,y^4)
\ela
then the 2+4-dimensional equation
\bela{Z48}
\Xi_{x^3y^4} - \Xi_{x^4y^3} = \Xi_{x^3y^1} \Xi_{x^4y^2} - \Xi_{x^3y^2}\Xi_{x^4y^1}
\ela
is obtained. This is the standard 6-dimensional second heavenly equation. Its standard Lax pair is retrieved by setting $\psi_{x^1}=\psi_{x^2} = 0$ in (\ref{Z25}), leading to
\bela{Z49}
 \begin{split}
   \psi_{y^3} & = \lambda\psi_{x^3} + \Xi_{x^3y^2} \psi_{y^1} - \Xi_{x^3y^1}\psi_{y^2}\\
   \psi_{y^4} & = \lambda\psi_{x^4} + \Xi_{x^4y^2} \psi_{y^1} - \Xi_{x^4y^1}\psi_{y^2}.
 \end{split}
\ela

\subsection*{(b) The Boyer-Finley equation}

The 6-dimensional second heavenly equation admits the obvious 0+4-dimensional reduction
\bela{Z50}
  \Xi = a(y^1,y^2,y^3,y^4)x^3 + b(y^1,y^2,y^3,y^4)x^4
\ela
so that
\bela{Z51}
 a_{y^4}  - b_{y^3} = a_{y^1}b_{y^2} - a_{y^2}b_{y^1}
\ela
and the Lax pair (\ref{Z49}) simplifies to
\bela{Z52}
 \begin{split}
  \psi_{y^3} & = a_{y^2} \psi_{y^1} - a_{y^1}\psi_{y^2}\\
  \psi_{y^4} & = b_{y^2} \psi_{y^1} -  b_{y^1}\psi_{y^2}
 \end{split}
\ela
since the parameter $\lambda$ is redundant and may be set to zero. If we prescribe the dependence of $a$ and $b$ on some variable, say, $y^2$ then the underdetermined equation (\ref{Z51}) gives rise to a multi-component system and $y^2$ may be regarded as a proper ``spectral'' parameter. For instance, if we set
\bela{Z53}
  a = \tilde{\lambda} + p(y^1,y^3,y^4),\quad b = q(y^1,y^3,y^4)/\tilde{\lambda},\quad \tilde{\lambda} = e^{y^2}
\ela
then we obtain the 0+3-dimensional 2-component system
\bela{Z54}
  p_{y^4} + q_{y^1} = 0,\quad q_{y^3} = p_{y^1}q.
\ela
Elimination of $p$ between these two equations leads to the Boyer-Finley heavenly equation
\bela{Z55}
  \varphi_{y^3y^4} = \left({e^{\varphi}}\right)_{y^1y^1},\quad q = -e^\varphi
\ela
which governs self-dual Einstein spaces subject to the existence of a Killing vector \cite{BoyerFinley1982} and may be regarded as the continuum limit of the integrable 2-dimensional Toda lattice equation. Moreover, the pair  (\ref{Z52}) is readily seen to constitute the associated standard Lax pair (see, e.g., \cite{BogdanovKonopelchenko2013}).

\subsection*{(c) The dispersionless KP equation}

Another admissible choice for the functions $a$ and $b$ in the reduction (\ref{Z50}) is given by
\bela{Z56}
 a = -u_{y_1} + \frac{1}{2}\big(y^2\big)^2,\quad b = 
y^2u_{y^1} + u_{y^3} - \frac{1}{3}\big(y^2\big)^3,
\ela
where $u=u(y^1,y^3,y^4)$. Indeed, evaluation of (\ref{Z51}) produces the dispersionless KP equation
\bela{Z57}
u_{y^4y^1} + u_{y^3y^3} = u_{y^1}u_{y^1y^1}.
\ela
As in the case of the Boyer-Finley equation, the Lax pair (\ref{Z52}) contains derivatives with respect to the spectral parameter $y^2$ (see, e.g., \cite{ManakovSantini2007}).

\subsection*{(d) The dispersionless Hirota equation}

The homogeneity of the general heavenly equation may be exploited to obtain an autonomous 3-dimensional equation by setting
\bela{Z58}
  \Theta = a(x^1)\Xi(x^2,x^3,x^4).
\ela
The symmetric equation (\ref{Z6}) then reduces to the dispersionless Hirota equation
\bela{Z59}
 \begin{split}
  & \phantom{+}\,\,\, (\lambda_1-\lambda_2)(\lambda_3-\lambda_4)\Xi_{x^2}\Xi_{x^3x^4}\\
  & +  (\lambda_2-\lambda_3)(\lambda_1-\lambda_4)\Xi_{x^4}\Xi_{x^2x^3}\\  
  & +  (\lambda_3-\lambda_1)(\lambda_2-\lambda_4)\Xi_{x^3}\Xi_{x^2x^4} = 0
\end{split}
\ela
which has been generalised in \S 3(d) to 3+3 dimensions (cf.\ (\ref{Z33b})) in connection with a symmetry constraint. Accordingly, the multi-dimensional consistency of the dispersionless Hirota equation, which has originally been observed in \cite{Krynski2018}, is encoded in that of its 3+3-dimensional version. Thus, the dispersionless Hirota equation is compatible with another three equations of the same kind which, in the current setting, are given by
\bela{Z60}
 \begin{split}
  & \phantom{+}\,\,\, (\lambda_1-\lambda_k)(\lambda_l-\lambda_5)\Xi_{x^k}\Xi_{x^lx^5}\\
  & +  (\lambda_k-\lambda_l)(\lambda_1-\lambda_5)\Xi_{x^5}\Xi_{x^kx^l}\\  
  & +  (\lambda_l-\lambda_1)(\lambda_k-\lambda_5)\Xi_{x^l}\Xi_{x^kx^5} = 0
\end{split}
\ela
for $k \neq l \in\{2,3,4\}$. These may be obtained directly by applying the reduction (\ref{Z58}) with \mbox{$\Xi=\Xi(x^2,x^3,x^4,x^5)$} to the system (\ref{Z7})$|_{i=1}$ of compatible general heavenly equations. Moreover, in the case $(i,k,l,m)=(2,3,4,5)$, the corresponding general heavenly equation
\bela{Z61}
 \begin{split}
  & \phantom{+}\,\,\, (\lambda_2-\lambda_3)(\lambda_4-\lambda_5)\Xi_{x^2x^3}\Xi_{x^4x^5}\\
  & +  (\lambda_3-\lambda_4)(\lambda_3-\lambda_5)\Xi_{x^3x^4}\Xi_{x^2x^5}\\  
  & +  (\lambda_4-\lambda_2)(\lambda_3-\lambda_5)\Xi_{x^4x^2}\Xi_{x^3x^5} = 0
\end{split}
\ela
persists. Accordingly, any simultaneous solution $\Xi(x^2,x^3,x^4,x^5)$ of the compatible system of four 3-dimensional dispersionless Hirota equations obeys the general heavenly equation which, by construction, constitutes an algebraic consequence of the four dispersionless Hirota equations. It is important to note that this is exactly the algebraic phenomenon which is well known for the discrete Hirota equation since the latter and its compatible counterparts are formally obtained by replacing the derivatives in (\ref{Z59}) and (\ref{Z60}) by shifts on a $\Z^4$-lattice (cf.\ \S 8(b)). 

\subsection*{(e) The Mikhalev equation}

The analogue of the ansatz (\ref{Z58}) applied to the 6-dimensional general heavenly equation (\ref{Z48}) given by
\bela{Z61a}
  \Xi = y^1 u(x^3,x^4,y^2,y^3,y^4)
\ela
leads to the 2+3-dimensional generalisation
\bela{Z61b}
  u_{x^3y^4} - u_{x^4y^3} = u_{x^3}u_{x^4y^2} - u_{x^4}u_{x^3y^2}
\ela
of the Mikhalev equation \cite{Mikhalev1992}. Indeed, application of the symmetry constraints $u_{x^3}=u_{y^2}$ and $u_{x^4}=u_{y^3}$ produces the 3-dimensional equation
\bela{Z62b}
  u_{y^2y^4} - u_{y^3y^3} = u_{y^2}u_{y^2y^3} - u_{y^3}u_{y^2y^2}
\ela
which, apart from the setting of \cite{Mikhalev1992}, also arises is various other contexts such as exceptional hydrodynamic-type systems~\cite{Pavlov2018}, self-dual Einstein spaces and hydrodynamic chains (see \cite{ManakovSantini2007} and references therein). The associated Lax pair $D_3\psi = -u_{y^2}\psi_{y^2}$, $D_4\psi = -u_{y^3}\psi_{y^2}$ with $\psi_{x^3}=\psi_{y^2}$ and $\psi_{x^4}=\psi_{y^3}$, obtained from (\ref{Z49}) by setting $\psi_{y^1}=0$, is equivalent to the known Lax pair for the Mikhalev equation (see, e.g., \cite{ManakovSantini2007}).

\subsection*{(f) Plebanski's first heavenly equation}

Another ansatz of the form (\ref{Z47}), namely 
\bela{Z62}
  \Xi = x^3y^4 + \Lambda(x^3,x^4,y^1,y^2),
\ela
reduces the 6-dimensional second heavenly equation to
\bela{Z63}
 \Lambda_{x^3y^1} \Lambda_{x^4y^2} - \Lambda_{x^3y^2}\Lambda_{x^4y^1} = 1.
\ela
This 2+2-dimensional equation is the first heavenly equation set down by Plebanski \cite{Plebanski1975}. On setting $\psi_{y^3}=\psi_{y^4}=0$, the associated linear system (\ref{Z13}) becomes
\bela{Z64}
  \left(\bear{cccc} 0 & 1 & \Lambda_{x^4y^2} & -\Lambda_{x^3y^2}\\ -1 & 0 & -\Lambda_{x^4y^1} & \Lambda_{x^3y^1}\\
                    - \Lambda_{x^4y^2} & \Lambda_{x^4y^1} & 0 & 1\\ \Lambda_{x^3y^2} &- \Lambda_{x^3y^1} & -1 & 0\ear\right)
  \left(\bear{c}\psi_{y^1}\\ \psi_{y^2}\\ -\lambda\psi_{x^3}\\ -\lambda\psi_{x^4}\ear\right)  = 0
\ela
with the corresponding Lax pair consisting of the third and fourth equation being standard \cite{DoubrovFerapontov2010}.

The infinitesimal symmetry of the 4+4-dimensional \TED equation stated in Theorem \ref{symmetry} now reveals in which {\em asymmetric} sense the first heavenly equation is multi-dimensionally consistent. Indeed, if we set $\psi=\Theta_s$ in (\ref{Z64}) then we obtain four equations which are essentially the same but different from the first Plebanski equation. Indeed, the four equations 
\bela{Z65}
\begin{split}
 \lambda\Lambda_{sx^3} & = \Lambda_{x^3y^1}\Lambda_{zy^2} - \Lambda_{x^3y^2}\Lambda_{zy^1}\\
 \lambda\Lambda_{sx^4} & = \Lambda_{x^4y^1}\Lambda_{zy^2} - \Lambda_{x^4y^2}\Lambda_{zy^1}\\
 \lambda^{-1}\Lambda_{sy^1} & = \Lambda_{x^4y^1}\Lambda_{zx^3} - \Lambda_{x^3y^1}\Lambda_{zx^4}\\
 \lambda^{-1}\Lambda_{sy^2} & = \Lambda_{x^4y^2}\Lambda_{zx^3} - \Lambda_{x^3y^2}\Lambda_{zx^4}
 \end{split}
\ela
are all of Husain-Park type \cite{DoubrovFerapontov2010}. A possible connection between the above system of five compatible heavenly equations and Takasaki's hyper-K\"ahler hierarchy associated with Plebanski's first heavenly equation \cite{Takasaki1989} is currently under investigation. 

\subsection*{(g) The Husain-Park equation}

Both the first heavenly equation and the Husain-Park equation may be regarded as degenerations of the general heavenly equation so that the consistency feature established in the preceding may also be inferred from the fully symmetric multi-dimensional consistency of the general heavenly equation. It turns out enlightening to present a sketch of this alternative derivation. Thus, one may directly verify that the symmetric system (\ref{Z7}) of general heavenly equations degenerates to
\bela{Z66}
 \begin{split}
  (\mu_3-\mu_4)\Xi_{x^3x^4} &= \Xi_{x^2x^3}\Xi_{x^1x^4} -  \Xi_{x^2x^4}\Xi_{x^1x^3}\\
  (\mu_4-\mu_5)\Xi_{x^4x^5} &= \Xi_{x^2x^4}\Xi_{x^1x^5} -  \Xi_{x^2x^5}\Xi_{x^1x^4}\\
  (\mu_5-\mu_3)\Xi_{x^5x^3} &= \Xi_{x^2x^5}\Xi_{x^1x^3} -  \Xi_{x^2x^3}\Xi_{x^1x^5}
 \end{split}
\ela
together with 
\bela{Z67}
 \begin{split}
   (\mu_3-\mu_4)\Xi_{x^3x^4}\Xi_{x^1x^5} +  (\mu_4 -\mu_5)\Xi_{x^4x^5}\Xi_{x^1x^3} +  (\mu_5-\mu_3)\Xi_{x^5x^3}\Xi_{x^1x^4} &= 0\\
   (\mu_3-\mu_4)\Xi_{x^3x^4}\Xi_{x^2x^5} +  (\mu_4 -\mu_5)\Xi_{x^4x^5}\Xi_{x^2x^3} +  (\mu_5-\mu_3)\Xi_{x^5x^3}\Xi_{x^2x^4} &= 0
 \end{split}
\ela
in the limit
\bela{Z68}
 \bear{c}
 \lambda_3 = \mu_3 + \sigma,\quad \lambda_4 = \mu_4 + \sigma,\quad \lambda_5 = \mu_5 + \sigma\as
 \Theta = \sigma x^1 x^2 + \Xi,\quad \sigma\rightarrow\infty.
 \ear
\ela
Here, we have applied without loss of generality the normalisation $\lambda_1\rightarrow\infty$ and $\lambda_2=0$, corresponding to a ``base equation'' of the form (\ref{Z1}). Accordingly, by construction, the symmetric ``hyperbolic'' triple (with respect to $x^3$, $x^4$ and $x^5$) of compatible Husain-Park equations (\ref{Z66}) is compatible and its solutions also satisfy the symmetric pair (with respect to $x^1$ and $x^2$) of general heavenly equations (\ref{Z67}). 

The above phenomenon is reminiscent of the multi-dimensional consistency of the Schwarzian version of the Hirota (dSKP) equation which constitutes a multi-ratio condition of Menelaus type \cite{KonopelchenkoSchief2002} in that the dSKP equation is multi-dimensionally consistent ``with itself'' but its degeneration to the discrete modified KP equation is not. Its asymmetric multi-dimensional consistency necessarily involves the dSKP equation \cite{AdlerBobenkoSuris2012}. It is now readily confirmed that a deeper degeneration is obtained by considering the additional limit
\bela{Z69}
 \Xi = \rho x^3x^4 + \Lambda,\quad  \mu_3-\mu_4 = \frac{1}{\rho}(\kappa_3-\kappa_4),\quad \rho\rightarrow\infty\as
\ela
leading to the compatibility of the first Plebanski and four Husain-Park equations discussed in the preceding section. It is tempting to perform similar limits on the remaining two Husain-Park equations (\ref{Z66})$_{2,3}$. This is indeed possible and (\ref{Z66}) then reduces to a compatible triple of first heavenly equations. However, in these limits, the two general heavenly equations (\ref{Z67}) become linear which implies that the three first heavenly equations are, in fact, equivalent to the initial first Plebanski equation and these two linear equations so that this (asymmetric) multi-dimensional extension of the first Plebanski equation becomes trivial.

The results of a more detailed investigation of the multi-dimensional consistency property of reductions of the 4+4-dimensional \TED equation will be presented elsewhere. Here, we merely observe that the multi-dimensional consistency of, for instance, the second heavenly equation involves both 4- and 5-dimensional equations.

\section{\boldmath A canonical extension to $2n$+$2n$ dimensions}

Remarkably, it turns out that there exists a canonical way of generalising the 4+4-dimensional \TED equation to $2n$+$2n$ dimensions in such a manner that all of the properties established in \S 3 and \S 4 are preserved. This extension is based on a ``miraculous'' classical algebraic property of the ``Pfaffian adjugate'' which guarantees that the skew-symmetric matrix $A$ underlying the generalisation of the symmetric linear system (\ref{Z13}) to arbitrary even order has rank 2 when imposing a single Pfaffian condition. As a result, one may interpret the $2n$+$2n$-dimensional \TED equation as being generated by the commutativity of two vector fields. Moreover, the proof of the theorem stated below does not require any new ideas. It follows exactly the line of arguments presented in \S 3 and \S 4.

\subsection*{\boldmath (a) A $2n$+$2n$-dimensional \TED equation}

In the preceding, it has been demonstrated that the 4+4-dimensional \TED equation may be formulated as the vanishing of the Pfaffian of the 2$\times$2 skew-symmetric matrix $\omega$, that is,
\bela{Z70}
  \pf(\omega) = \omega_{12}\omega_{34} + \omega_{23}\omega_{14} + \omega_{31}\omega_{24} = 0.
\ela
In general, the Pfaffian of a $2n$$\times$$2n$ skew-symmetric matrix $\omega$ is (implicitly) given by \cite{Hirota2003}
\bela{Z71}
  \frac{1}{n!}\wedge^n\bar{\Omega} = \pf(\omega) e^1\wedge\cdots\wedge e^{2n},
\ela
where the 2-form $\bar{\Omega}$ is defined by
\bela{Z72}
 \bar{\Omega} =\frac{1}{2} \omega_{ik}e^i\wedge e^k
\ela
and $\{e^1,\ldots,e^{2n}\}$ denotes the standard basis of $\mathbb{R}^{2n}$. Hence, for instance, the 4+4-dimensional \TED equation may be formulated as $\wedge^2\bar{\Omega}  = \bar{\Omega}\wedge\bar{\Omega} = 0$. In terms of the Levi-Civita symbol, the Pfaffian adopts the form
\bela{Z73}
  \pf(\omega) = \frac{1}{2^n n!}\epsilon^{i_1\cdots i_{2n}}\omega_{i_1i_2}\cdots\omega_{i_{2n-1}i_{2n}},
\ela
which coincides with (\ref{Z70}) in the case $n=2$.

Now, the key observation is that the skew-symmetric matrices $A$ and $\omega$ in terms of which the linear system (\ref{Z16}) is defined obey the relation
\bela{Z74}
  A^{kl}\omega_{lm} = -2\pf(\omega){\delta^k}_m,
\ela
where ${\delta^k}_m$ denotes the Kronecker symbol. Hence, $A$ may be regarded as the scaled Pfaffian analogue of the adjugate of $\omega$. In fact, this is just a particular case of the general classical identity~\cite{Shapiro2000} 
\bela{Z75}
  A^{kl}\omega_{lm} = -2^{n-1}(n-1)!\pf(\omega){\delta^k}_m
\ela
which holds for any $2n$$\times$$2n$ skew-symmetric matrix $\omega$ and the scaled Pfaffian adjugate $A$ given by
\bela{Z76}
  A^{kl} =   \epsilon^{i_1\cdots i_{2n-2}kl}\omega_{i_1i_2}\cdots\omega_{i_{2n-3}i_{2n-2}}.
\ela
In matrix notation, this may be formulated as $A\omega \sim\pf(\omega)\one$.
One of the remarkable implications of this identity is that if the matrix $\omega$ is singular but $\pf(\omega)=0$ is the only constraint on $\omega$ then $\rank\omega= 2n-2$ together with $A\omega=0$ yields \mbox{$\rank A\leq 2$}. In fact, since $\det A=0\,\Leftrightarrow\,\det\omega=0$ and the coefficients $A^{kl}$ are essentially Pfaffians which are signed square roots of $(2n-2)$$\times$$(2n-2)$ principal minors of the matrix $\omega$, it is evident that the important equivalence
\bela{Z77}
  \rank A = 2\quad\Leftrightarrow\quad \rank\omega=2n-2
\ela
holds.
Accordingly, if we consider the ``generic'' situation within the class $\pf(\omega)=0$, that is, $\rank\omega=2n-2$, then the following theorem holds.

\begin{theorem}
The linear system of $2n$ equations
\bela{Z78}
  A^{kl}D_l\psi = 0,
\ela
wherein the skew-symmetric matrix $A$ is given by
\bela{Z79}
  A^{kl} =   \epsilon^{i_1\cdots i_{2n-2}kl}\omega_{i_1i_2}\cdots\omega_{i_{2n-3}i_{2n-2}}
\ela
and
\bela{Z80}
  \omega_{ik} = \Theta_{x^iy^k} - \Theta_{x^ky^i},\quad D_i= \del_{y^i} - \lambda\del_{x^i},
\ela
is compatible modulo the $2n$+$2n$-dimensional \TED equation $\pf(\omega)=0$, that is,
\bela{Z81}
  \epsilon^{i_1\cdots i_{2n}}\omega_{i_1i_2}\cdots\omega_{i_{2n-1}i_{2n}} = 0.
\ela
The latter guarantees that $\rank A = 2$ so that the $2n$+$2n$-dimensional \TED equation is generated by the commutativity $[X^k,X^{k'}]=0\bmod\{X^k,X^{k'}\}$ of any pair of the $2n$ divergence-free vector fields $X^k = A^{kl}D_l$. The eigenfunction $\psi$ constitutes an infinitesimal symmetry and the $2n$+$2n$-dimensional \TED equation is multi-dimensionally consistent in that the \mbox{$(2n+1)$$+$$(2n+1)$}-dimensional system of $2n$$+$$2n$-dimensional \TED equations
\bela{Z82}
 H^k = \epsilon^{i_1\cdots i_{2n}k}\omega_{i_1i_2}\cdots\omega_{i_{2n-1}i_{2n}} = 0,\quad k=1,\ldots,2n+1
\ela
is in involution.
\end{theorem}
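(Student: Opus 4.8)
The plan is to run, \emph{mutatis mutandis}, the three arguments of \S3 and \S4 with the four-index Levi-Civita symbol replaced throughout by its $2n$- and $(2n+1)$-index counterparts, the whole analysis being powered by the classical Pfaffian-adjugate identity (\ref{Z75}) together with one elementary antisymmetrisation observation. First I would dispose of the rank claim: since $\pf(\omega)=0$, the identity (\ref{Z75}) reads $A\omega=0$, whence $\rank A+\rank\omega\le 2n$; invoking the already-established equivalence (\ref{Z77}), the generic stratum $\rank\omega=2n-2$ forces $\rank A=2$. Consequently the $2n$ rows of (\ref{Z78}) span a two-dimensional space and the system collapses to any two independent equations $X^k\psi=0$, $X^{k'}\psi=0$. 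The vector-field reading is then immediate: the $X^k=A^{kl}D_l$ all lie in the rank-two distribution annihilated by (\ref{Z78}), so their pairwise brackets vanish modulo themselves precisely when that distribution is integrable, and identifying this integrability with $\pf(\omega)=0$ is what recasts the \TED equation as a Zakharov--Shabat commutativity condition.

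The single combinatorial fact I would isolate at the outset is that the fields are divergence free, $D_kA^{kl}=0$. Differentiating (\ref{Z79}) by the product rule produces a sum of terms in each of which some factor $D_k\omega_{i_{2j-1}i_{2j}}$ is contracted, through $\epsilon^{i_1\cdots i_{2n-2}kl}$, against the three indices $k,i_{2j-1},i_{2j}$; total antisymmetry extracts exactly $D_{[k}\omega_{i_{2j-1}i_{2j}]}$, which vanishes by the divergence conditions (\ref{Z18}). The same bookkeeping yields $\del_{x^k}A^{kl}=0$ and $\del_{y^k}A^{kl}=0$ from the split conditions (\ref{Z20}). Compatibility of the reduced Lax pair then follows exactly as in (\ref{Z17}), (\ref{Z32}): applying $D_k$ to (\ref{Z78}) and summing, the term $A^{kl}D_kD_l\psi$ dies by skew-symmetry of $A$ against the symmetric $D_kD_l$, while the surviving $(D_kA^{kl})D_l\psi$ vanishes by the divergence identity, so the cross-derivative condition holds identically modulo the system.

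For the infinitesimal-symmetry assertion I would repeat the computation (\ref{Z29})--(\ref{Z32}) verbatim at order $2n$. Writing the \TED equation as (\ref{Z81}) and applying the flow $\Theta_s=\psi$, the totally symmetric dependence of $\epsilon^{i_1\cdots i_{2n}}\omega_{i_1i_2}\cdots\omega_{i_{2n-1}i_{2n}}$ on its $n$ skew pairs collapses the symmetry condition to $A^{lm}\psi_{x^ly^m}=0$. On the other hand $\del_{x^k}(A^{kl}D_l\psi)=0$ expands, using $\del_{x^k}A^{kl}=0$ and the annihilation of the $\lambda\,\del_{x^k}\del_{x^l}$ term by skew-symmetry, to precisely $A^{kl}\psi_{x^ky^l}=0$. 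Hence $\varphi=\psi$ solves the symmetry condition, and as in \S3(c) one reads off the scaling symmetry $\varphi=\Theta$ by comparison.

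The hard part, which I would treat last and most carefully, is the multi-dimensional consistency of the $(2n+1)$-index system (\ref{Z82}). The master relation is the divergence identity $D_pH^p=0$, proved by the same antisymmetrisation argument as $D_kA^{kl}=0$ and splitting into $\del_{x^p}H^p=0$ and $\del_{y^p}H^p=0$. First I would settle the count of independent equations: regarding $H^1=\cdots=H^{2n}=0$ as a \emph{linear} system for the quantities $\omega_{l,2n+1}$ reproduces the rank-two system (\ref{Z37}) with $A$ the $2n$-dimensional Pfaffian adjugate, so these impose only two independent conditions, which together with $H^{2n+1}=0$ leaves exactly three algebraically independent equations, as in the case $n=2$. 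It then suffices to read the two independent members among $H^1,\dots,H^{2n}$ as the $y^{2n+1}$-evolution of $\Theta$---an inhomogeneous analogue of the Lax pair---and to show (i) this evolution pair is compatible and (ii) the constraint $H^{2n+1}=0$ is propagated. Both follow from the split divergence conditions exactly as (\ref{Z41})--(\ref{Z43}) do for $n=2$: the cross-derivative condition of the evolution pair reduces, by the $x$-divergence conditions, to $\del_{x^{2n+1}}H^{2n+1}=0$, a differential consequence of the constraint, while $\del_{y^{2n+1}}H^{2n+1}$ is recast, by the $y$-divergence conditions, as a $\del_{y^m}$-divergence of the left-hand side of (\ref{Z37}) and hence vanishes modulo the evolution equations. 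I expect the only genuine friction to be notational---tracking signs and the combinatorial weights of the Levi-Civita contractions in $2n$ and $2n+1$ indices---since, as the authors emphasise, no new idea beyond (\ref{Z75}) and the antisymmetrisation identity is needed; the resulting involutivity in the Riquier--Janet sense then follows from the clean separation into two evolution equations plus one propagated constraint.
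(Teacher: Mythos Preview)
Your proposal is correct and follows essentially the same route as the paper, which explicitly declines to reprove the theorem and instead states that ``the proof \ldots\ does not require any new ideas. It follows exactly the line of arguments presented in \S3 and \S4.'' You have accurately reconstructed those arguments in their $2n$-index form: the rank claim via the Pfaffian-adjugate identity (\ref{Z75}) and (\ref{Z77}); compatibility of the Lax system by the divergence manoeuvre of (\ref{Z17}); the infinitesimal symmetry by the $\del_{x^k}$-contraction of (\ref{Z32}); and multi-dimensional consistency via the count (\ref{Z37}), the evolution pair (\ref{Z39}), and the two halves (\ref{Z41})--(\ref{Z43}) of the master identity $D_pH^p=0$ recorded in Remark~4.1, equations (\ref{Z44})--(\ref{Z46}).
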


\begin{remark}
As in the 4+4-dimensional case, by virtue of the definition of the Pfaffian, the $2n$+$2n$-dimensional \TED equation may be expressed as 
\bela{ZZ83a}
  \del\Omega = 0,\quad \wedge^n\Omega=0, \qquad  \Omega = \omega_{ik}dz^i\wedge dz^k,
\ela
where the coefficients of the differential 2-form are, once again, real. We also note that any solution of the \TED equation for which the complex Hessian ${(\Theta_{z^i\bar{z}^k})}_{i,k}$ is positive definite gives rise to a K\"ahler metric \cite{BlumenhagenLuestTheisen2013} via the real closed 2-form
\bela{ZZ84d}
  \tilde{\omega} = ig_{kl}dz^k\wedge d\bar{z}^l,\quad g_{kl} = \Theta_{z^k\bar{z}^l}.
\ela
\end{remark}

\begin{remark}
The above formulation in terms of a differential 2-form is reminiscent of the construction of the $2n$-dimensional generalisation of the general heavenly equation proposed by Bogdanov \cite{Bogdanov2015} which corresponds to the travelling wave reduction $\Theta_{y^i}=\lambda_i\Theta_{x^i}$. However, the essential difference in the current approach is that the matrix $\omega$ is {\em independent} of the spectral parameter $\lambda$. Here, the dependence on $\lambda$ is encapsulated in the operators $D_i$ which is the key to ``doubling'' the number of independent variables and revealing the nature of the parameters~ $\lambda_i$.
\end{remark}

\begin{remark}
Since the vector fields $X^k=A^{kl}D_l$ are formally anti-self adjoint due to being divergence-free, we immediately conclude that
\bela{ZZ85}
  D_l(A^{kl}\psi) = 0.
\ela
The significance of these $\lambda$-dependent conservations laws in terms of the \TED equation will be discussed elsewhere.
\end{remark}

\begin{remark}\label{lower}
Even though we have assumed that $\rank\omega=2n-2$ so that $A$ does not vanish identically, the solutions of the $2n$+$2n$-dimensional \TED equation for which $A=0$ are still of significance. Indeed, the multi-dimensional consistency of the $2k$+$2k$-dimensional \TED equation for any $k<n$ implies that any solution of its consistent extension to a system of $2k$+$2k$-dimensional equations involving $2n$+$2n$ independent variables obeys the $2n$+$2n$-dimensional \TED equation. In this connection, it is observed that even the case $k=1$ is meaningful since the associated 2+2-dimensional \TED equation becomes the 2+2-dimensional wave equation
\bela{Z83}
 \Theta_{x^1y^2} - \Theta_{x^2y^1} = 0,
\ela
the multi-dimensional consistency of which is trivial due to its linearity.
\end{remark}

\subsection*{(b) A Manakov-Santini connection}

Both the dispersionless KP equation and the Mikhalev equation are known to represent prominent reductions of a multi-dimensional system established by Manakov and Santini \cite{ManakovSantini2006} in the spirit of Zakharov and Shabat's original work \cite{ZakharovShabat1979} on multi-dimensional integrable systems generated by the commutativity requirement of two vector fields. Since Plebanski's second heavenly equation has also been shown to be a Hamiltonian reduction of the Manakov-Santini system \cite{ManakovSantini2006}, it is natural to investigate the existence of a possible link between the Manakov-Santini system and the \TED equation. Here, we briefly indicate that, at least in the 2+4-dimensional case, a connection may be made.

The general scalar Lax pair based on two normalised formal vector fields is given by
\bela{M1}
 (D_{n-1} + u^iD_i)\psi = 0,\quad  (D_n + v^iD_i)\psi = 0
\ela
with Einstein's summation convention covering indices from $1$ to $n-2$. The associated compatibility condition leads to the underdetermined system of equations
%^
\bela{M2}
  D_nu^i - D_{n-1}v^i + v^kD_ku^i - u^kD_kv^i = 0.
\ela
If we make the usual choice
\bela{M3}
  D_i = \del_{y^i} - \lambda\del_{x^i},\quad i=1,\ldots, n
\ela
then the above system splits into the two sets
\bela{M4}
 \begin{split}
 u^i_{x^n} - v^i_{x^{n-1}} + v^ku^i_{x^k} - u^kv^i_{x^k} & = 0\\
 u^i_{y^n} - v^i_{y^{n-1}} + v^ku^i_{y^k} - u^kv^i_{y^k} & = 0.
 \end{split}
\ela
The Manakov-Santini system is obtained by assuming independence of the variables $x^i$,\linebreak \mbox{$i=1,\ldots,n-2$} so that
\bela{M5}
 u^i_{x^n} - v^i_{x^{n-1}} = 0,\quad
 u^i_{y^n} - v^i_{y^{n-1}} + v^ku^i_{y^k} - u^kv^i_{y^k} = 0,
\ela
corresponding to the Lax pair
\bela{M6}
 \begin{split}
  (\del_{y^{n-1}} - \lambda\del_{x^{n-1}} + u^i\del_{y^i})\psi &= 0\\
  (\del_{y^{n}} - \lambda\del_{x^{n}} + v^i\del_{y^i})\psi &= 0.
 \end{split}
\ela
Moreover, it has been demonstrated that it is consistent to impose the condition that the vector fields underlying the above Lax pair be divergence-free so that
\bela{M7}
  \del_{y^i}u^i = 0,\quad \del_{y^i}v^i = 0.
\ela

The generality of the scheme (\ref{M1}), (\ref{M3}) raises the question as to whether there exists a connection between the vanishing divergence conditions leading to the \TED equation and the Manakov-Santini system subject to the admissible divergence conditions (\ref{M7}). It is important to note that even though the Lax pairs associated with vector fields are invariant under multiplication of the vector fields by arbitrary functions, the condition that the vector fields be divergence-free is not. Accordingly, it is {\em a priori} not evident that such a connection exists and this problem will be investigated in detail elsewhere. Here, we focus on the case \mbox{$n = 4$} so that the first set of equations (\ref{M5})$_1$ together with the vanishing divergence conditions adopt the form
\bela{M8}
  u^1_{x^4} = v^1_{x^3},\quad  u^2_{x^4} = v^2_{x^3},\quad u^1_{y^1} + u^2_{y^2} = 0,\quad v^1_{y^1} + v^2_{y^2} = 0
\ela
which may be resolved in terms of a single function $\Xi$ according to
\bela{M9}
  u^1 = -\Xi_{x^3y^2},\quad u^2 = \Xi_{x^3y^1},\quad v^1 = -\Xi_{x^4y^2},\quad v^2 = \Xi_{x^4y^1}.
\ela
Now, comparison with (\ref{Z49}) reveals that the Lax pair (\ref{M6}) for $n=4$ is identical with the Lax pair for the 6-dimensional second heavenly equation (\ref{Z48}). Indeed, the remaining equations (\ref{M5})$_2$ turn out to be differential consequences of the 6-dimensional second heavenly equation and are, in fact, equivalent to it modulo the additive gauge freedom encoded in the parametrisation (\ref{M9}).

\section{\boldmath Embedding into $(2n+1)$+$(2n+1)$ dimensions}

It has been pointed out in Remark \ref{lower} that the $2n$+$2n$-dimensional \TED equation admits solutions in terms of compatible systems of lower-dimensional \TED equations. Alternatively, the ansatz
\bela{Z84}
 \Theta = x^{2n+1}y^{2n+2} + \tilde{\Theta}(x^1,\ldots,x^{2n},y^1,\ldots,y^{2n})
\ela
reduces the $(2n+2)$+$(2n+2)$-dimensional \TED equation to the $2n$+$2n$-dimensional \TED equation. Since the above ansatz does not involve the variables $x^{2n+2}$ and $y^{2n+1}$, it is natural to refer to the dimensional reduction
\bela{Z85}
  \epsilon^{i_1\cdots i_{2n+2}}\omega_{i_1i_2}\cdots\omega_{i_{2n+1}i_{2n+2}} = 0, \quad \Theta_{x^{2n+2}}=0,\quad\Theta_{y^{2n+1}}=0
\ela
as the $(2n+1)$+$(2n+1)$-dimensional \TED equation and regard the $2n$+$2n$-dimensional \TED equation as being embedded in the $(2n+1)$+$(2n+1)$-dimensional \TED equation via the relation (\ref{Z84}). In particular, any of the reductions of the 4+4-dimensional \TED equation discussed in \S 5 may be ``lifted'' to a corresponding reduction of the 5+5-dimensional \TED equation which we may formulate as
\bela{Z86}
  \epsilon^{iklmpq}\Theta_{x^iy^k}\Theta_{x^ly^m}\Theta_{x^py^q}=0,\quad \Theta_{x^6}=0,\quad \Theta_{y^5}=0.
\ela
In this connection, it turns out convenient to introduce the ``two-directional Hessian'' (determinant)
\bela{Z87}
 \mathcal{H}(\Theta;x^{i_1},\ldots,x^{i_m};y^{k_1},\ldots,y^{k_m}) = 
   \left|\bear{ccc} \Theta_{x^{i_1}y^{k_1}} & \cdots & \Theta_{x^{i_1}y^{k_m}}\\
                            \vdots & \ddots & \vdots\\
                            \Theta_{x^{i_m}y^{k_1}} & \cdots & \Theta_{x^{i_m}y^{k_m}}
   \ear\right|
\ela
as will be seen below.

\subsection*{(a) A 3+5-dimensional second heavenly equation}

We now apply the reduction (\ref{Z47}) to the 5+5-dimensional \TED equation (\ref{Z86}) so that
\bela{Z88}
  \Theta = x^1y^2 + \Xi(x^3,x^4,x^5,y^1,y^2,y^3,y^4,y^6),
\ela
bearing in mind the additional independent variables $x^5$ and $y^6$. As a result, we obtain the 3+5-dimensional extension
\bela{Z89}
  \mathcal{H}(\Xi;x^3,x^5;y^4,y^6) -  \mathcal{H}(\Xi;x^4,x^5;y^3,y^6) =  \mathcal{H}(\Xi;x^3,x^4,x^5;y^1,y^2,y^6)
\ela
of the 6-dimensional second heavenly equation. Indeed, by construction, the specialisation
\bela{Z90}
  \Theta = x^5y^6 + \tilde{\Xi}(x^3,x^4,y^1,y^2,y^3,y^4)
\ela
leads to (\ref{Z48}).

\subsection*{(b) A 5-dimensional second heavenly equation}

The identification $(x^3,x^4,x^5)=(y^1,y^2,y^6)$, corresponding to a symmetry reduction of the 3+5-dimensional second heavenly equation (\ref{Z89}), generates the 5-dimensional equation
\bela{Z91}
  \mathcal{H}(\Xi;x^3,x^5;y^4,x^5) -  \mathcal{H}(\Xi;x^4,x^5;y^3,x^5) =  \mathcal{H}(\Xi;x^3,x^4,x^5;x^3,x^4,x^5).
\ela
It is observed that the last term in this equation constitutes a proper cubic Hessian determinant. The specialisation
\bela{Z92}
  \Xi = {(x^5)}^2 + \tilde{\Xi}(x^3,x^4,y^3,y^4)
\ela
gives rise to Plebanski's second heavenly equation
\bela{Z93}
  \tilde{\Xi}_{x^3y^4} - \tilde{\Xi}_{x^4y^3} =  \mathcal{H}(\tilde{\Xi};x^3,x^4,x^3,x^4)
\ela
in accordance with the general approach (\ref{Z84}).

\subsection*{(c) A 3+3-dimensional first heavenly equation}

The analogue of the deeper reduction (\ref{Z62}), that is,
\bela{Z94}
  \Xi = x^3y^4 + \Lambda(x^3,x^4,x^5,y^1,y^2,y^6),
\ela
applied to the 3+5-dimensional second heavenly equation (\ref{Z89}), produces the cubic 3+3-dimen\-sional version 
 \bela{Z95}
 \Lambda_{x^5y^6}  = \mathcal{H}(\Lambda;x^3,x^4,x^5;y^1,y^2,y^6)
\ela
of Plebanski's first heavenly equation. It is noted that the latter may be formulated as
\bela{Z96}
  1 = \mathcal{H}(\tilde{\Lambda};x^3,x^4;y^1,y^2)
\ela
which confirms the connection $\Lambda = x^5y^6 + \tilde{\Lambda}(x^3,x^4,y^1,y^2)$.

\subsection*{(d) A 2+4-dimensional first heavenly equation}

A quadratic generalisation of Plebanski's first heavenly equation is obtained by considering the reduction
\bela{Z97}
  \Theta = x^1y^2 + x^3y^6 + x^5y^4 + \Lambda(x^4,x^5,y^1,y^2,y^3,y^6)
\ela
applied to the 5+5-dimensional \TED equation subject to $\Theta_{x^2}=0$. In this case, we deduce that
\bela{Z98}
 \mathcal{H}(\Lambda;x^4,x^5;y^1,y^2) + \mathcal{H}(\Lambda;x^4,x^5;y^3,y^6) + 1 = 0.
\ela
Even though the usual specialisation $\Lambda = x^5y^6 + \tilde{\Lambda}(x^4,y^1,y^2,y^3)$ becomes trivial, the dimensional reduction $\Lambda_{y^6}=0$ produces the standard first heavenly equation
\bela{Z99}
 \mathcal{H}(\Lambda;x^4,x^5;y^1,y^2) + 1 = 0.
\ela
A systematic examination of the reductions of the higher-dimensional \TED equation is under way. In this connection, it is noted that the 2+4-dimensional first heavenly equation (\ref{Z98}) turns out to be part of Takasaki's hyper-K\"ahler hierarchy associated with Plebanski's first heavenly equation \cite{Takasaki1989}. 

\section{Concluding remarks}

\subsection*{(a) A Grassmannian connection}

Since the rank of the skew-symmetric matrix $A$ is 2, the solutions of the $2n$+$2n$-dimensional \TED equation parametrise, via the $(2n-2)$-dimensional kernel of $A$, ``trajectories'' in the Grassmannian $\Gr(2n-2,2n)$, which is the space of all linear subspaces of dimension $2n-2$ of a $2n$-dimensional vector space \cite{HodgePedoe1994} (cf.\ \cite{BogdanovKonopelchenko2013}). Specifically, in terms of the Pfaffians 
\bela{Z100}
  \pi_{i_1\cdots i_{2n-2}} = \pf(i_1,\ldots,i_{2n-2})
\ela
of the $(2n-2)$$\times$$(2n-2)$ skew-symmetric submatrices of $\omega$ with the indicated ordering of the $2n-2$ distinct indices, the linear system (\ref{Z78}), (\ref{Z79}) may be formulated as
\bela{Z101}
  \sum_{k=1}^{2n-1} (-1)^k \pi_{i_1\cdots i_{k-1} i_{k+1} \cdots i_{2n-1}} D_{i_k}\psi = 0
\ela
for all sets of distinct indices $\{i_1,\ldots,i_{2n-1}\}$. Now, by virtue of the Pfaffian identity \cite{Hirota2003}
\bela{Z102}
  \begin{split}
  &\phantom{+}\,\,\pf(1,2,3,4,5,\ldots,2n)\pf(5,\ldots,2n)\\
& = \pf(1,2,5,\ldots,2n)\pf(3,4,5,\ldots,2n)\\
& + \pf(2,3,5,\ldots,2n)\pf(1,4,5,\ldots,2n)\\
& + \pf(3,1,5,\ldots,2n)\pf(2,4,5,\ldots,2n),
\end{split}
\ela
we conclude that $\pf(\omega)=0$ implies that
\bela{Z103}
  \pi_{12\alpha}\pi_{34\alpha} + \pi_{23\alpha}\pi_{14\alpha} + \pi_{31\alpha}\pi_{24\alpha} = 0,
\ela
where $\alpha$ denotes the multi-index $(5\cdots 2n)$. The complete set of relations of this type obtained by permuting the indices constitute the Pl\"ucker relations  \cite{HodgePedoe1994} associated with the Grassmannian $\Gr(2n-2,2n)$. Accordingly, only $2n - (2n-2) = 2$ equations of the linear system (\ref{Z101}) are independent which confirms the assertion that $\rank A = 2$. Thus, the quantities $\pi_{i_1\cdots i_{2n-2}}$ are the Pl\"ucker coordinates of the Grassmannian $\Gr(2n-2,2n)$. The geometric implications of the connection between the \TED equation and Grassmannians is the subject of a separate investigation.

\subsection*{(b) The discrete general heavenly equation}

It is natural to inquire as to a possible extension of the approach presented here to the discrete setting since the general heavenly equation (\ref{Z6}) has originally been derived \cite{Schief1999,Schief1996} as the continuum limit of the discrete equation
\bela{Z104}
 \begin{split}
  & \phantom{+}\,\,\, (\lambda_1-\lambda_2)(\lambda_3-\lambda_4)(\tau\tau_{12}-\tau_1\tau_2)(\tau\tau_{34}-\tau_3\tau_4)\\
  & +  (\lambda_2-\lambda_3)(\lambda_1-\lambda_4)(\tau\tau_{23}-\tau_2\tau_3)(\tau\tau_{14}-\tau_1\tau_4)\\  
  & +  (\lambda_3-\lambda_1)(\lambda_2-\lambda_4)(\tau\tau_{13}-\tau_1\tau_3)(\tau\tau_{24}-\tau_2\tau_4) = 0
\end{split}
\ela
with the identification $\Theta = \ln\tau$ and the indices on $\tau$ denoting shifts in the respective directions of the underlying $\Z^4$-lattice. This discrete general heavenly equation admits a great variety of integrable reductions as initially indicated in \cite{Schief1999} and analysed in detail in a separate publication. Therein, possible extensions in both the continuous and discrete settings of the algebraic scheme presented here are discussed. 

Even though the discrete general heavenly equation does not, {\em a priori}, appear to be multi-dimensionally consistent, it may be decomposed into the standard system of compatible 3-dimen\-sional Hirota equations, thereby revealing a novel algebraic consequence of the multi-dimensional consistency of the discrete master Hirota equation. The key idea is to split (\ref{Z104}) into two parts which are independent of the ``unshifted'' $\tau$, that is,
\bela{Z105}
  \begin{split}
  & \phantom{+}\,\,\, (\lambda_1-\lambda_2)(\lambda_3-\lambda_4)\tau_{12}\tau_{34}\\
  & +  (\lambda_2-\lambda_3)(\lambda_1-\lambda_4)\tau_{23}\tau_{14}\\  
  & +  (\lambda_3-\lambda_1)(\lambda_2-\lambda_4)\tau_{13}\tau_{24} = 0
\end{split}
\ela
and
\bela{Z106}
  \begin{split}
  & \phantom{+}\,\,\, (\lambda_1-\lambda_2)(\lambda_3-\lambda_4)(\tau_3\tau_4\tau_{12}+\tau_1\tau_2\tau_{34})\\
  & +  (\lambda_2-\lambda_3)(\lambda_1-\lambda_4)(\tau_1\tau_4\tau_{23}+\tau_2\tau_3\tau_{14})\\  
  & +  (\lambda_3-\lambda_1)(\lambda_2-\lambda_4)(\tau_2\tau_4\tau_{13}+\tau_1\tau_3\tau_{24}) = 0.
\end{split}
\ela
The first equation (\ref{Z105}) is a well-known algebraic consequence of the compatible system of Hirota equations \cite{AdlerBobenkoSuris2012}
\bela{Z107}
 \begin{split}
  (\lambda_1-\lambda_2)\tau_3\tau_{12} + (\lambda_2-\lambda_3)\tau_1\tau_{23} + (\lambda_3-\lambda_1)\tau_2\tau_{13} & = 0\\
  (\lambda_1-\lambda_2)\tau_4\tau_{12} + (\lambda_2-\lambda_4)\tau_1\tau_{24} + (\lambda_4-\lambda_1)\tau_2\tau_{14} & = 0\\
  (\lambda_1-\lambda_3)\tau_4\tau_{13} + (\lambda_3-\lambda_4)\tau_1\tau_{34} + (\lambda_4-\lambda_1)\tau_3\tau_{14} & = 0\\
  (\lambda_2-\lambda_3)\tau_4\tau_{23} + (\lambda_3-\lambda_4)\tau_2\tau_{34} + (\lambda_4-\lambda_2)\tau_3\tau_{24} & = 0
 \end{split}
\ela
of which only three are algebraically independent. In fact, it merely constitutes another symmetric version of the Hirota equation. Remarkably, the second equation (\ref{Z106}) is then identically satisfied. Hence, the discrete general heavenly equation (\ref{Z104}) may be decomposed into the five compatible Hirota equations (\ref{Z105}), (\ref{Z107}). The system (\ref{Z107}) is  the discrete analogue of the system of four dispersionless Hirota equations (\ref{Z59}), (\ref{Z60}).
\bigskip

\noindent
{\bf Funding.} This research was supported by the Australian Research Council (DP1401000851).
\medskip

\noindent
{\bf Acknowledgement.} BGK gratefully acknowledges the kind hospitality of the School of Mathematics and Statistics, UNSW, Sydney.

\end{document}